  \providecommand\BibTeX{{%
    \normalfont B\kern-0.5em{\scshape i\kern-0.25em b}\kern-0.8em\TeX}}}
\def\eqref#1{equation~\ref{#1}}
\def\1{\bm{1}}
\DeclareMathAlphabet{\mathsfit}{\encodingdefault}{\sfdefault}{m}{sl}
\SetMathAlphabet{\mathsfit}{bold}{\encodingdefault}{\sfdefault}{bx}{n}
\newcommand{\model}{\textit{MBrain}\xspace}
\newtheorem{task}{Task}
\newtheorem{theorem}{Theorem}
\newtheorem{proposition}{Proposition}
\newcommand{\vpara}[1]{\vspace{0.05in}\noindent\textbf{#1 }}
\begin{document}

\title{MBrain: A Multi-channel Self-Supervised Learning Framework for Brain Signals}

\author{Donghong Cai}
\authornote{Both authors contributed equally to this research.}
\affiliation{%
  \institution{Zhejiang University}
  \country{}
  }
\email{donghongcai@zju.edu.cn}
\orcid{0009-0005-5790-7505}

\author{Junru Chen}
\authornotemark[1]
\affiliation{%
  \institution{Zhejiang University}
  \country{}
  }
\email{jrchen_cali@zju.edu.cn}
\orcid{0000-0002-5989-2897}

\author{Yang Yang}
\authornote{Corresponding author.}
\affiliation{%
  \institution{Zhejiang University}
  \country{}
  }
\email{yangya@zju.edu.cn}
\orcid{0000-0002-5058-4417}

\author{Teng Liu}
\affiliation{%
  \institution{Zhejiang University}
  \country{}
  }
\email{liuteng_27@zju.edu.cn}
\orcid{0009-0007-7040-1378}

\author{Yafeng Li}
\affiliation{%
  \institution{Nuozhu Technology Co., Ltd.}
  \country{}
  }
\email{yafeng.li@neurox.cn}
\orcid{0009-0001-9681-939X}

\begin{abstract}

Brain signals are important quantitative data for understanding physiological activities and diseases of human brain. 
Meanwhile, rapidly developing deep learning methods offer a wide range of opportunities for better modeling brain signals, which has attracted considerable research efforts recently. 
Most existing studies pay attention to supervised learning methods, which, however, require high-cost clinical labels. 
In addition, the huge difference in the clinical patterns of brain signals measured by invasive (\emph{e.g.}, SEEG) and non-invasive (\emph{e.g.}, EEG) methods leads to the lack of a unified method. 
To handle the above issues, in this paper, we propose to study the self-supervised learning (SSL) framework for brain signals that can be applied to pre-train either SEEG or EEG data. 
Intuitively, brain signals, generated by the firing of neurons, are transmitted among different connecting structures in human brain. 
Inspired by this, we propose \model to learn implicit spatial and temporal correlations between different channels (\emph{i.e.}, contacts of the electrode, corresponding to different brain areas) as the cornerstone for uniformly modeling different types of brain signals. 
Specifically, we represent the spatial correlation by a \textit{graph} structure, which is built with proposed multi-channel CPC.
We theoretically prove that optimizing the goal of multi-channel CPC can lead to a better predictive representation and apply the \textit{instantaneou-time-shift prediction} task based on it. 
Then we capture the temporal correlation by designing the \textit{delayed-time-shift prediction} task.
Finally, \textit{replace-discriminative-learning} task is proposed to preserve the characteristics of each channel.
Extensive experiments of seizure detection on both EEG and SEEG large-scale real-world datasets demonstrate that our model outperforms several state-of-the-art time series SSL and unsupervised models, and has the ability to be deployed to clinical practice.
% The source code is publicly available at \href{}{}.

\end{abstract}

\begin{CCSXML}
<ccs2012>
   <concept>
       <concept_id>10010405.10010444.10010447</concept_id>
       <concept_desc>Applied computing~Health care information systems</concept_desc>
       <concept_significance>500</concept_significance>
       </concept>
 </ccs2012>
\end{CCSXML}

\ccsdesc[500]{Applied computing~Health care information systems}

\keywords{brain signals, self-supervised learning, multi-channel time series, seizure detection}

\maketitle

\section{Introduction}

Brain signals are foundational quantitative data for the study of human brain in the field of neuroscience. 
The patterns of brain signals can greatly help us to understand the normal physiological function of the brain and the mechanism of related diseases. 
There are many applications of brain signals, such as cognitive research~\citep{Ismail2020eegCognitive, Shiba2018eegCognitive}, emotion recognition~\citep{song2020eegEmotion, chen2019eegEmotion}, neurological disorders~\citep{Alturki2020eegDisorder, Ye2019eegSeizure} and so on.
Brain signals can be measured by noninvasive or invasive methods~\citep{PALUSZEK2015297}. 
The noninvasive methods, like \textit{electroencephalography} (EEG), cannot simultaneously consider temporal and spatial resolution along with the deep brain information, but they are easier to implement without any surgery. 
As for invasive methods like \textit{stereoelectroencephalography} (SEEG), they require extra surgeries to insert the recording devices, but have access to more precise and higher signal-to-noise data.
For both EEG and SEEG data, there are multiple \textit{electrodes} with several
contacts (also called \textit{channels}) that are sampled at a fixed frequency to record brain signals.

Recently, discoveries in the field of neuroscience have inspired advances of deep learning techniques, which in turn promotes neuroscience research. 
According to the literature, most deep learning-based studies 
of brain signals 
focus on supervised learning~\citep{shoeibi2021epileptic, rasheed2020machine, zhang2021survey, craik2019deep}, which relies on a large number of clinical labels. However, 
obtaining accurate and reliable clinical labels requires a high cost. 
In the meantime, the emergence of self-supervised learning (SSL) and its great success~\citep{chen2021simsiam, Brown2020GPT3, devlin2018bert, van2018representation} makes it a predominant learning paradigm in the absence of labels.
Therefore, some recent studies have introduced the means of SSL to extract the representations of brain signal data. 
For example, \citet{Banville2021SSLEEG} directly applies general SSL tasks to pre-train EEG data, including relative position prediction~\citep{doersch2015unsupervised}, temporal shuffling~\citep{misra2016shuffle} and contrastive predictive coding~\citep{van2018representation}. \citet{Mohsenvand2020CRLforEEG} designs data augmentation methods, and extends the self-supervised model SimCLR~\citep{chen2020simple} in 
computer vision to EEG data. 
In contrast to numerous works investigating EEG, few studies focus on SEEG data. 
\citet{martini2021deep} proposes an SSL model for real-time epilepsy monitoring in multimodal scenarios with SEEG data and video recordings. 

Despite the advances on representation learning of brain signals, two main issues remain to be overcome. 
\textbf{Firstly, almost all existing methods are designed for a particular type of brain signal data, and there is a lack of a unified method for handling both EEG and SEEG data.}
The challenge mainly lies in the different clinical patterns of brain signals that need to be measured in different ways. 
On the one hand, EEG collects noisy and rough brain signals on the scalp;  
differently, SEEG collects deeper signals with more stereo spatial information, which indicates more significant differences of different brain areas~\citep{perucca2014intracranial}.
On the other hand, in contrast to EEG with a gold-standard collection location, the monitoring areas of SEEG vary greatly between subjects, leading to different number and position of channels. 
Therefore, how to find the commonalities of EEG and SEEG data to design a unified framework is challenging.

\textbf{Another issue is the gap between existing methods and the real-world applications.} In clinical scenarios, doctors typically locate brain lesions by analyzing signal patterns of \textit{each} channel and their \textit{holistic} correlations. 
A straight-forward way for this goal is to model each of the channels separately by single-channel time series models, which, however, 
cannot exploit correlations between brain areas~\citep{davis2020spontaneous, lynn2019physics}.
As for the existing multivariable time series models, most of them can
only capture implicit correlation patterns~\citep{zerveas2021transformer,chen2021multi}, whereas explicit correlations are required by doctors for identifying lesions.
Moreover, although some graph-based methods have been proposed to explicitly learn correlations, they focus on giving an overall prediction for all channels at a time but overlook the prediction on one specific channel~\citep{zhang2021graph, shang2021discrete}.
Therefore, how to explicitly capture the spatial and temporal correlations 
while giving channel-wise prediction is another issue to be overcome. 

To address the challenges above, we propose a 
multi-channel self-supervised learning framework \model, which can be generally applied for learning representations of both EEG and SEEG data. 
Specifically, based on domain knowledge and data observations, we propose to learn the correlation graph between channels as the common cornerstone for both two types of brain signals. 
In particular, we employ Contrastive Predictive Coding (CPC)~\citep{van2018representation} as the backbone model of our framework by extending it to handle multi-channel data.
We theoretically prove that the optimization objective of the proposed multi-channel CPC is to maximize the mutual information of each channel and its correlated ones, so as to obtain better predictive representations.
Based on the multi-channel CPC, 
we propose the instantaneous time shift task to explicitly learn the spatial correlations between channels,
and the delayed time shift task and the replace discriminative task are designed
to capture the temporal correlation patterns and to preserve the characteristics of each channel respectively.
To validate the effectiveness of our model, we pay special attention to its application in seizure detection.
Extensive experiments show that \model outperforms several state-of-the-art baselines on large-scale real-world EEG and SEEG datasets for the seizure detection task.
Overall, the main contributions of this work can be summarized as follows:

\begin{itemize}[leftmargin=*]
    \item We are the first work to design a generalized self-supervised learning framework, which can be applied to pre-train both EEG and SEEG signals.
    
    \item We propose \model to explicitly capture the spatial and temporal correlations of brain signals to learn a unique representation for each channel.
    
    \item We validate the effectiveness and clinical value of the proposed framework through extensive experiments on large-scale real-world EEG and SEEG datasets.
\end{itemize}

\section{Preliminary: Theoretical Analysis of Multi-channel CPC }\label{sec:theory}

We employ Contrastive Predictive Coding (CPC)~\citep{van2018representation} as the basis of our framework. 
The pretext task of CPC is to predict low-level local representations by high-level global contextual representations $c_{t}$ at the $t$-th time step. Theoretically, the optimal InfoNCE loss proposed by CPC with $N-1$ negative samples $\mathcal{L}_{N}^{\text{opt}}$ is a lower bound of the mutual information between contextual semantic distribution $p(c_{t})$ and raw data distribution $p(x_{t+k})$, \emph{i.e.}, $\mathcal{L}_{N}^{\text{opt}} \ge -I(x_{t+k};c_{t}) + \log{N}$, where $k$ is the prediction step size.
CPC is originally designed for single-channel sequence data only, and 
there are two natural ways to extend single channel CPC to multi-channel version. The first one is to use CNNs with multiple kernels to encode all channels simultaneously, which cannot offer explicit correlation patterns for doctors to identify lesions. The second one is to train a shared CPC regarding all channels as one, which has no ability to capture the correlation patterns.
Taking a comprehensive consideration, we propose 
\textit{multi-channel CPC} in this paper. 
Our motivation is to explicitly aggregate the semantic information of multiple channels to predict the local representations of one channel.
Formally, we propose the following proposition as our basic starting point.
\begin{proposition}
    Introducing the contextual information of the correlated channels increases the amount of mutual information with the raw data of the target channel. 
    \begin{equation}
        I(x_{t+k}^{i};\Phi(c_{t})) = I(x_{t+k}^{i};c_{t}^{i}, \Phi(\{c_{t}^{j}\}_{j \neq i})) \ge I(x_{t+k}^{i}; c_{t}^{i}),
    \end{equation}
    where $i$ and $j$ are indexes of the channels. $\Phi(\cdot)$ represents some kinds of aggregate function, which has no additional formal constraints other than the need to retain information of the target channel.
\end{proposition}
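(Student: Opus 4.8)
The plan is to prove the statement in two moves: first establish the equality, then the inequality. The equality $I(x_{t+k}^{i};\Phi(c_{t})) = I(x_{t+k}^{i};c_{t}^{i}, \Phi(\{c_{t}^{j}\}_{j \neq i}))$ is essentially definitional given the only assumption placed on $\Phi$, namely that it retains the information of the target channel. Concretely, I would argue that because $c_t^i$ is recoverable from $\Phi(c_t)$ through a deterministic map, the aggregated variable $\Phi(c_t)$ carries exactly the same information as the pair $(c_t^i, \Phi(\{c_t^j\}_{j\neq i}))$: the two random variables generate the same $\sigma$-algebra (they are related by a measurable bijection on the relevant support), so any mutual information computed against $x_{t+k}^i$ must coincide. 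This step is where the informal phrase ``retain information of the target channel'' has to be pinned down into a precise measurability statement, and I expect it to be the main obstacle — not because it is deep, but because $\Phi$ is left deliberately unspecified, so one must carefully rule out the degenerate case in which the aggregation collapses or discards $c_t^i$.

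For the inequality, I would invoke the chain rule for mutual information. Writing $X = x_{t+k}^i$, $Y = c_t^i$, and $Z = \Phi(\{c_t^j\}_{j\neq i})$, the chain rule gives
\[
I(X; Y, Z) = I(X; Y) + I(X; Z \mid Y),
\]
which, substituting back, reads $I(x_{t+k}^i; c_t^i, \Phi(\{c_t^j\}_{j\neq i})) = I(x_{t+k}^i; c_t^i) + I(x_{t+k}^i; \Phi(\{c_t^j\}_{j\neq i}) \mid c_t^i)$. Since conditional mutual information is always non-negative, the second term on the right is at least $0$, yielding $I(x_{t+k}^i; c_t^i, \Phi(\{c_t^j\}_{j\neq i})) \ge I(x_{t+k}^i; c_t^i)$ directly.

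Combining this with the equality from the first step completes the argument. As a final remark I would characterize the equality case: the bound is tight precisely when $x_{t+k}^i$ and the aggregated context $\Phi(\{c_t^j\}_{j\neq i})$ of the other channels are conditionally independent given $c_t^i$, that is, when the correlated channels contribute no predictive information beyond the target channel's own context. This matches the intended interpretation that introducing the correlated channels can only help — never hurt — the predictive representation, which is exactly the motivation for the multi-channel extension of CPC.
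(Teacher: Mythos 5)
Your proof is correct and follows essentially the same route as the paper's: the chain rule decomposition $I(x_{t+k}^{i};c_{t}^{i},\Phi(\{c_{t}^{j}\}_{j\neq i})) = I(x_{t+k}^{i};c_{t}^{i}) + I(x_{t+k}^{i};\Phi(\{c_{t}^{j}\}_{j\neq i})\mid c_{t}^{i})$ followed by non-negativity of conditional mutual information. Your additional care with the first equality (pinning down that $\Phi(c_t)$ is informationally equivalent to the pair $(c_t^i,\Phi(\{c_t^j\}_{j\neq i}))$, which in the paper's actual model is a concatenation, so this holds trivially) and your characterization of the equality case go slightly beyond the paper's two-line argument but do not change the approach.
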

\begin{proof}
    We use the linear operation of mutual information to obtain: $I(x_{t+k}^{i};c_{t}^{i}, \Phi(\{c_{t}^{j}\}_{j \neq i})) = I(x_{t+k}^{i}; c_{t}^{i}) + I(x_{t+k}^{i};\Phi(\{c_{t}^{j}\}_{j \neq i}) | c_{t}^{i})$. 
    According to the non-negativity of the conditional mutual information, we complete the proof.
\end{proof}
It seems natural that the predictive ability of multiple channels is stronger than that of a single channel, which is also consistent with the assumption of Granger causality~\citep{granger1969investigating} to some extent. Therefore, we choose to approximate the more informative $I(x_{t+k}^{i};\Phi(c_{t}))$ to obtain more expressive representations. Specifically, followed by InfoNCE, we define our loss function $\mathcal{L}_{N}$ as 
\begin{equation}
    \mathcal{L}_{N} = -\sum_{i}\mathbb{E}_{X^{i}}\left[ \log{\frac{f_{k}(x_{t+k}, \Phi(c_{t}))}{\sum_{x_{j} \in X} f_{k}(x_{j}, \Phi(c_{t}))}} \right], \label{eq:multi_cpc}
\end{equation}
where $X^{i}$ denotes the data sample set consisting of one positive sample and $N-1$ negative samples of the $i$-th channel.
We then establish the relationship between $\mathcal{L}_{N}$ and $I(x_{t+k}^{i};\Phi(c_{t}))$.
\begin{theorem}
    Given a sample set for each channel $X^{i}=\{ x_{1}^{i}, \dots, x_{N}^{i} \}$, $i=1,\dots,n$ consisting of one positive sample from $p(x_{t+k}^{i}|\Phi(c_{t}))$ and $N-1$ negative samples from $\sum_{j}p(x_{t+k}^{j})/n$, where $n$ is the number of channels. The optimal $\mathcal{L}_{N}^{\text{opt}}$ is the lower bound of $\sum_{i} I(x_{t+k}^{i};\Phi(c_{t}))$:
    \begin{equation}
        \mathcal{L}_{N}^{\text{opt}} \ge \sum_{i} \left[ -I(x_{t+k}^{i};\Phi(c_{t})) + \log{N} \right].
    \end{equation}
\end{theorem}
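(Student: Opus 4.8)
The plan is to reduce the statement to $n$ independent copies of the single-channel CPC bound recalled above and then add them up. Because the multi-channel loss $\mathcal{L}_N$ in \eqref{eq:multi_cpc} is an explicit sum $\sum_i \mathcal{L}_N^{(i)}$ over channels, and the score function $f_k$ can be optimized independently in each term, it suffices to establish $\mathcal{L}_N^{(i),\mathrm{opt}} \ge -I(x_{t+k}^i;\Phi(c_t)) + \log N$ for a fixed $i$. First I would recognize each $\mathcal{L}_N^{(i)}$ as the expected cross-entropy of a categorical classifier that must single out, among the $N$ candidates in $X^i$, the one positive drawn from $p(x_{t+k}^i \mid \Phi(c_t))$ against $N-1$ negatives drawn from the proposal $q = \tfrac{1}{n}\sum_j p(x_{t+k}^j)$. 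A short Bayes-posterior computation then shows that this term is minimized by any critic proportional to the density ratio $f_k(x,\Phi(c_t)) \propto p(x^i \mid \Phi(c_t))/q(x)$, the overall scale being irrelevant because it cancels in the normalized fraction.

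Next I would substitute this optimal critic and simplify. Doing so turns $\mathcal{L}_N^{(i),\mathrm{opt}}$ into $\mathbb{E}\,\log\!\big(1 + \tfrac{q(x_{t+k})}{p(x_{t+k}^i\mid\Phi(c_t))}\sum_{\mathrm{neg}} \tfrac{p(x_l^i\mid\Phi(c_t))}{q(x_l)}\big)$. The pivotal fact is that the negatives are drawn from exactly the proposal $q$, so that $\mathbb{E}_{x_l \sim q}[\,p(x_l^i\mid\Phi(c_t))/q(x_l)\,] = \int p(x^i\mid\Phi(c_t))\,dx = 1$; replacing the inner sum by its mean value $N-1$ and using $\log(1+z) \ge \log z$ collapses the term to $\log N + \mathbb{E}\,\log\!\big(q(x_{t+k})/p(x_{t+k}^i\mid\Phi(c_t))\big)$. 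Summing over $i$ reproduces the stated inequality, provided the log-ratio expectation can be identified with $-I(x_{t+k}^i;\Phi(c_t))$.

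That last identification is exactly where I expect the real difficulty to sit. The denominator surviving in the bound is the \emph{mixture} proposal $q$, whereas the target quantity $I(x_{t+k}^i;\Phi(c_t))$ is defined against the \emph{per-channel} marginal $p(x_{t+k}^i)$. Expanding the log-ratio gives $\mathbb{E}\,\log\big(p(x^i\mid\Phi)/q\big) = I(x_{t+k}^i;\Phi(c_t)) + D_{\mathrm{KL}}\big(p(x_{t+k}^i)\,\|\,q\big)$, and the correction $D_{\mathrm{KL}} \ge 0$ points the wrong way: carried through, it would only certify $\mathcal{L}_N^{(i),\mathrm{opt}} \ge \log N - I - D_{\mathrm{KL}}$, which is weaker than the claim. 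The clean statement therefore needs the channels to share a common marginal, $p(x_{t+k}^j) = p(x_{t+k}^i)$ for all $j$, so that $q = p(x_{t+k}^i)$ and the KL term vanishes, at which point each per-channel term reduces verbatim to the single-channel CPC bound. I would make this homogeneity assumption explicit (it is natural for simultaneously recorded channels) and, as in the original InfoNCE analysis, note that replacing the negative sum by its expectation is the usual approximation that becomes tight as $N \to \infty$.
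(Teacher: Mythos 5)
Your proposal follows the same skeleton as the paper's proof: decompose $\mathcal{L}_N$ into per-channel InfoNCE terms, identify the optimal critic as the density ratio $p(x_{t+k}^{i}|\Phi(c_{t}))/q$ against the mixture proposal $q=\frac1n\sum_{j}p(x_{t+k}^{j})$, and invoke the standard CPC approximation (negative sum replaced by its expectation) to reach the per-channel bound $\mathcal{L}_{N}^{(i),\mathrm{opt}} \ge \log N + \mathbb{E}\log\bigl[q/p(x_{t+k}^{i}|\Phi(c_{t}))\bigr]$. The divergence is in the final step, and it is the substantive part. The paper passes from the mixture to the per-channel marginals via the AM--GM/Jensen inequality $\Pi_{j} p(x_{t+k}^{j}) \le [\frac1n\sum_{j}p(x_{t+k}^{j})]^{n}$ (\eqref{eq:jensen}), treats the discrepancy as slack, and only remarks afterwards that equality holds iff all channels are identically distributed. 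Your KL decomposition $\mathbb{E}\log\bigl[p(x_{t+k}^{i}|\Phi(c_{t}))/q\bigr] = I(x_{t+k}^{i};\Phi(c_{t})) + D_{\mathrm{KL}}\bigl(p(x_{t+k}^{i})\,\|\,q\bigr)$ exposes the same phenomenon but is sharper about its direction: what one can actually certify is $\mathcal{L}_{N}^{\mathrm{opt}} \ge \sum_{i}\bigl[\log N - I(x_{t+k}^{i};\Phi(c_{t})) - D_{\mathrm{KL}}\bigl(p(x_{t+k}^{i})\,\|\,q\bigr)\bigr]$, which is weaker than the stated theorem unless every channel marginal coincides with $q$. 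This is correct, and it is arguably a more honest account than the paper's: read strictly, the paper's chain requires $\sum_{i} D_{\mathrm{KL}}\bigl(p(x_{t+k}^{i})\,\|\,q\bigr)\le 0$, which fails except under homogeneity, so the Jensen gap sits on the unsafe side of the inequality rather than being harmless looseness (the paper's notation blurs this by conflating the mixture density evaluated at channel $i$'s sample with the average of each channel's density at its own sample). Your explicit homogeneity assumption is precisely the equality condition of \eqref{eq:jensen} that the paper states and then justifies empirically by per-channel normalization; you promote it from a post-hoc remark to a hypothesis, which is what a rigorous version of the theorem needs. In short: same decomposition, same critic, same CPC-level heuristics, but your handling of the mixture-versus-marginal mismatch is the rigorous counterpart of the step the paper waves through with AM--GM.
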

\begin{proof}
    The optimal $f_{k}(x_{t+k}, \Phi(c_{t}))$ is proportional to the division of two distributions $p(x_{t+k}^{i}|\Phi(c_{t}))/(\sum_{j}p(x_{t+k}^{j})/n)$, which is the same as single-channel CPC.
    And we can directly replace the data distributions in the proof of single-channel CPC (see details in Appendix~\ref{app:cpc}) to obtain the inequality below:
    \begin{align}
        \mathcal{L}_{N}^{\text{opt}} &\ge \sum_{i} \left[\mathbb{E}_{X^{i}} \log{\left[ \frac{\frac1n\sum_{j}p(x_{t+k}^{j})}{p(x_{t+k}^{i}|\Phi(c_{t}))} \right]} + \log{N} \right]\\
        &= \mathbb{E}_{X^{1}, X^{2}, \dots, X^{n}} \log{\left[ \frac{[\frac1n\sum_{j}p(x_{t+k}^{j})]^{n}}{\Pi_{j} p(x_{t+k}^{j}|\Phi(c_{t}))} \right]} + n\log{N}.~\label{eq:mcpc_inequ}
    \end{align}
    According to the Jensen Inequality and concavity of the logarithmic function, we obtain that $(\sum_{j}\log{p(x_{t+k}^{j})})/n \le \log{(\sum_{j}p(x_{t+k}^{j})/n)}$. By exponentiating the two equations, we have 
    \begin{equation}
        \Pi_{j} p(x_{t+k}^{j}) \le [\frac1n\sum_{j}p(x_{t+k}^{j})]^{n}.~\label{eq:jensen}
    \end{equation}
    With the help of~\eqref{eq:jensen}, we can further obtain the lower bound of~\eqref{eq:mcpc_inequ}:
    \begin{align}
        \mathcal{L}_{N}^{\text{opt}} &\ge \mathbb{E}_{X^{1}, X^{2}, \dots, X^{n}} \log{\left[ \frac{\Pi_{j}p(x_{t+k}^{j})}{\Pi_{j} p(x_{t+k}^{j}|\Phi(c_{t}))} \right]} + n\log{N} \\
        &= \sum_{i} \left[ -I(x_{t+k}^{i};\Phi(c_{t})) + \log{N} \right].
    \end{align}
    Then we complete the proof.
\end{proof}

We next analyze the advantages of multi-channel CPC over single-channel CPC. 
Our loss function $\mathcal{L}_{N}$ leads to a better predictive representation because we approximate a more informative objective $I(x_{t+k}^{i};\Phi(c_{t}))$, if the optimal loss function for each channel has $\log{N}$ gap with $I(x_{t+k}^{i};\Phi(c_{t}))$, which is the same in single-channel CPC.
Moreover, with the same GPU memory, the more channels, the smaller the batch size that can be accommodated. But we can randomly sample negative samples across all channels, which increases the diversity of negative samples.
However, in order to narrow the approximation gap,~\eqref{eq:jensen} should be considered. The equality sign in this inequality holds if and only if samples from each channel follows the same distribution.
In fact, for many large-scale time series data (\emph{e.g.}, brain signal data used in this work), 
by normalizing each channel, they all exhibit close normal distributions, leading to small gaps in~\eqref{eq:jensen}.

\section{Proposed Method}

In this section, we introduce the details of the novel self-supervised learning framework \model. 
For the commonality between EEG and SEEG, we are inspired by the synergistic effect of brain function and nerve cells, that is, different connectivity patterns correspond to different brain states~\citep{lynn2019physics}. In particular, for brain signals, nerve cells will spontaneously generate traveling waves and spread them out~\citep{davis2020spontaneous}, maintaining some characteristics such as shape
during the process. Therefore,
the degree of channel similarity implies different propagation patterns of traveling waves, reflecting the differences in connectivity patterns to some extent.
Both EEG and SEEG data follow the inherent physiological mechanism.
Therefore, we propose to extract the correlation graph structure between channels (brain areas) as the cornerstone to unify EEG and SEEG (Section~\ref{sec:correlation}).
Next, we introduce three SSL tasks to model brain signals in Section~\ref{sec:ssl_task}. We propose \textit{instantaneous time shift task} based on multi-channel CPC and \textit{delayed time shift task} to capture the spatial and temporal correlation patterns. Then \textit{Replace discriminative task} is designed to preserve characteristics of each channel.

\vpara{Notations.}
For both EEG and SEEG data, there are multiple electrodes with $\mathbf{C}$ channels. We use $X = \{ x_{l} \in \mathbb{R}^{\mathbf{C}}, l=1,\dots,\mathbf{L}\}$ to represent raw time series data with $\mathbf{L}$ time points. $i$ and $j$ denote the index of channels. $Y_{l, i} \in \{ 0, 1\}$ 
is the label for the $l$-th time point of the $i$-th channel. We use a $\mathbf{W}$-length window with no overlap to obtain the time segments $S = \{ s_{t}, t=1,\dots,|S|\}$ (see details in Appendix~\ref{app:segment}). The label corresponding to the $t$-th time segment of the $i$-th channel is 
denoted as $Y^{s}_{t, i}$.

\subsection{Learning Correlations between Channels}\label{sec:correlation}

\begin{figure}[ht]
  \centering
  \includegraphics[width=\linewidth]{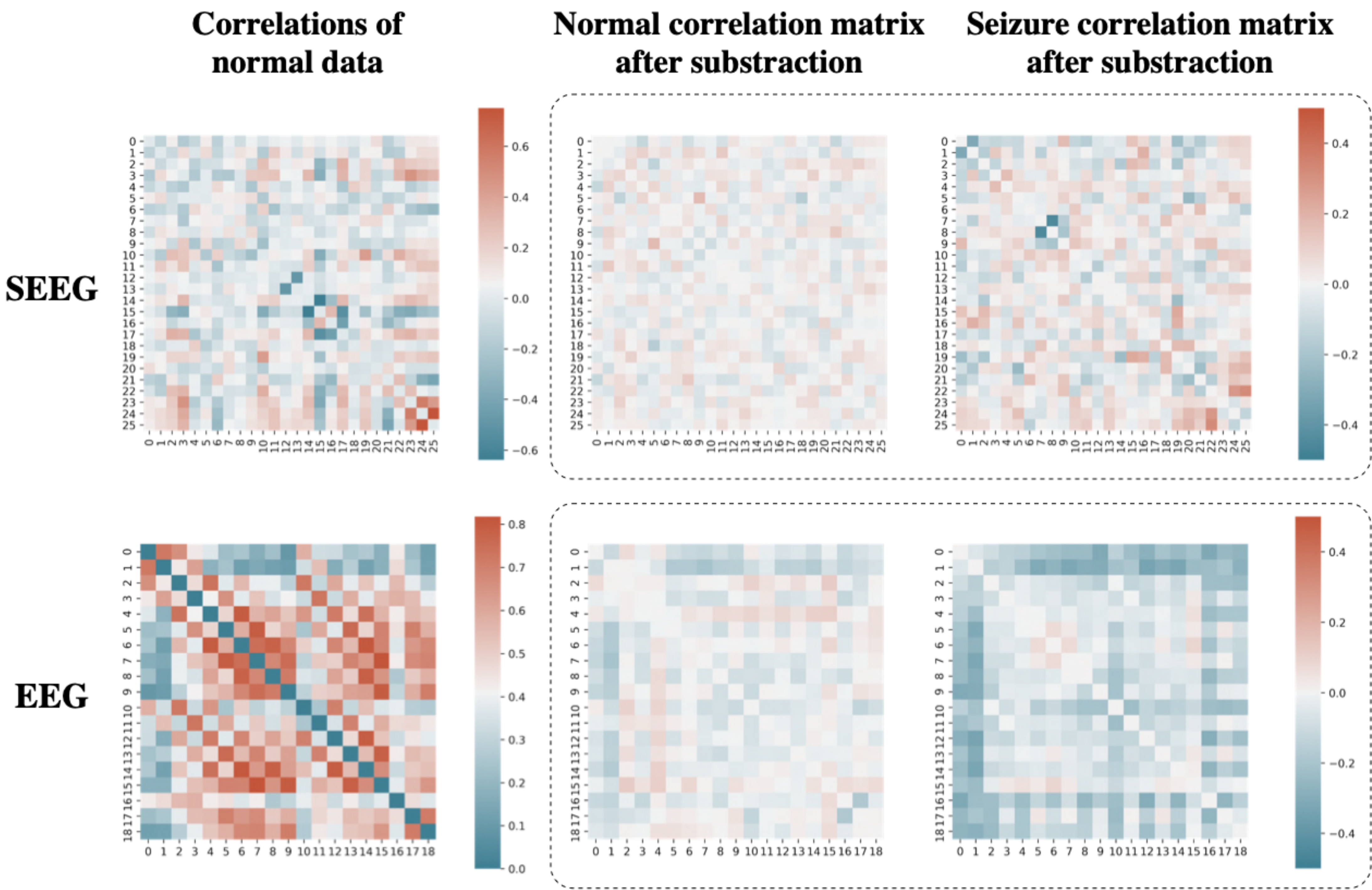}
  \caption{The normal and seizure correlation matrices of EEG and SEEG brain signals. 
  \small
  The top row is for SEEG and the bottom row is for EEG. For clear presentation, we sample some channels in SEEG data. The leftmost two figures are the base correlation matrices on normal data. 
  The two figures in the middle column represent the matrices after subtracting another normal correlation matrices from the base matrices, and the rightmost column includes matrices after subtracting seizure correlation matrices from the base matrices.
  }
  \label{pic:correlation_graph_sub}
\end{figure}

As mentioned above, the correlation patterns between different brain areas can help us to distinguish brain activities in downstream tasks to a large extent.
Taking the seizure detection task as an example, when seizures occur, more rapid and significant propagation of spike-and-wave discharges will appear~\citep{proix2018predicting}, which greatly enhances the correlation between channels. 
This phenomenon is also verified by data observations in Figure~\ref{pic:correlation_graph_sub}.
As Figure~\ref{pic:correlation_graph_sub} shows, for both EEG and SEEG data, we can observe that the correlation matrices are nearly identical on two normal segments without overlap in the same subject. In contrast, the correlation matrix in the epileptic states differs greatly from the normal ones. These data observations verify the conclusion that correlation patterns can help us to distinguish different brain states, and support us to treat correlation graph structure learning as the common cornerstone of our framework.
However, correlations between brain regions are difficult to be observed and recorded directly. 
Therefore, for each time step $t$, our goal is to learn the structure of the correlation graph, whose adjacency matrix is $\mathbf{A}_{t}$, where nodes in the graph indicate channels and weighted edges denote the correlations between channels.

Considering that the brain is in normal and stable state most of the time,
we first define
the \textit{coarse-grained} correlation graph as the prior graph for a particular individual as 
\begin{equation}
    \mathbf{A}^{\text{coarse}}(i, j) = \mathbb{E}_{s_{t}}[\text{Cosine}(s_{t, i}, s_{t, j})],
\end{equation}
where the expectation operation averages over all the correlation matrices computed in only one time segment $s_{t}$, and $\text{Cosine}(\cdot, \cdot)$ denotes the cosine similarity function.

Next, based on $\mathbf{A}^{\text{coarse}}$, for 
each pair of channels, we further model their \textit{fine-grained} short-term correlation within each time segment. 
We assume that the fine-grained correlations follow a Gaussian distribution element-wise, whose location parameters are elements of $\mathbf{A}^{\text{coarse}}$ and scale parameters will be learned from the data. 
By means of the reparameterization trick, the short-term correlation matrix of the $t$-th time segment is sampled from the learned Gaussian distribution:
\begin{align}
    \sigma_{t}(i, j) &= \text{SoftPlus}(\text{MLP}(c^{\text{self}}_{t,\tau,i}, c^{\text{self}}_{t,\tau,j})),\\
    n_{t}(i, j) &\sim \mathcal{N}(0, 1),\\
    \mathbf{A}^{\text{fine}}_{t}(i, j) &= \mathbf{A}^{\text{coarse}}(i, j) + \sigma_{t}(i, j) \times n_{t}(i, j).
\end{align}
$\text{SoftPlus}(\cdot)$ is a commonly used activation function to ensure the learned standard deviation is positive. $c^{\text{self}}_{t,\tau}$ is the contextual representation of raw time segments extracted by encoders (see details in Section~\ref{sec:ssl_task}). 
To remove the spurious correlations caused by low frequency signals and enhance the sparsity, which is a common assumption in neuroscience~\citep{yu2017connectivity}, 
we filter the edges by a threshold-based function to obtain the final correlation graph structure $\mathbf{A}_{t}$: 
\begin{equation}
    \mathbf{A}_{t}(i, j) = \left\{
    \begin{aligned}
    &\mathbf{A}^{\text{fine}}_{t}(i, j), \qquad &\mathbf{A}^{\text{fine}}_{t}(i, j) \geq \theta_{1}, \\ 
    &0, \qquad &\mathbf{A}^{\text{fine}}_{t}(i, j) < \theta_{1}.
    \end{aligned}\right.
\end{equation}

\subsection{Self-supervised Learning for Brain Signals}\label{sec:ssl_task} 

To capture the correlation patterns in space and time, we propose two self-supervised tasks: 
\textit{instantaneous time shift} that is based on multi-channel CPC and captures the short-term correlations focusing on spatial patterns; and \textit{delayed time shift} for temporal patterns in broader time scales. 
\textit{Replace discriminative learning} is designed to preserve the unique characteristics of each channel so as to achieve accurate channel-wise prediction.

\vpara{Instantaneous Time Shift.}
For spatial patterns, we aim to leverage the contextual information of correlated channels to better predict future data of the target channel. Therefore, we apply multi-channel CPC and utilize the fine-grained graph structure $\mathbf{A}_{t}$ obtained in Section~\ref{sec:correlation} as the correlations between channels.

\begin{figure}[ht]
  \centering
    \subfigure[The correlation matrices of delayed time shift of SEEG data.]{
    \begin{minipage}[ht]{\linewidth}
    \centering
    \includegraphics[width=\linewidth]{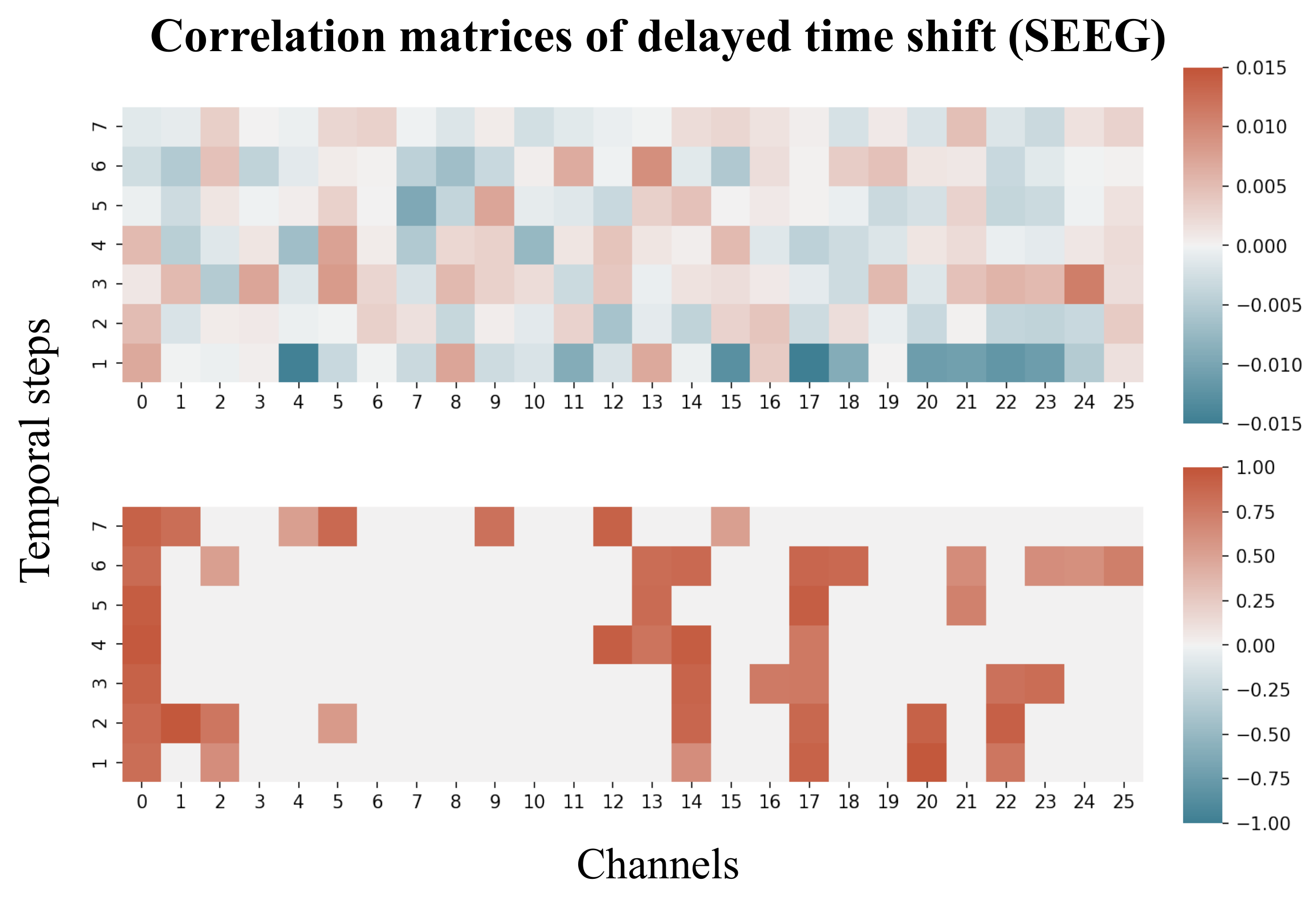}
    \end{minipage}%
    \label{pic:delayed_seeg}}%
    \\
    \subfigure[The correlation matrices of delayed time shift of EEG data.]{
    \begin{minipage}[ht]{\linewidth}
    \centering
    \includegraphics[width=\linewidth]{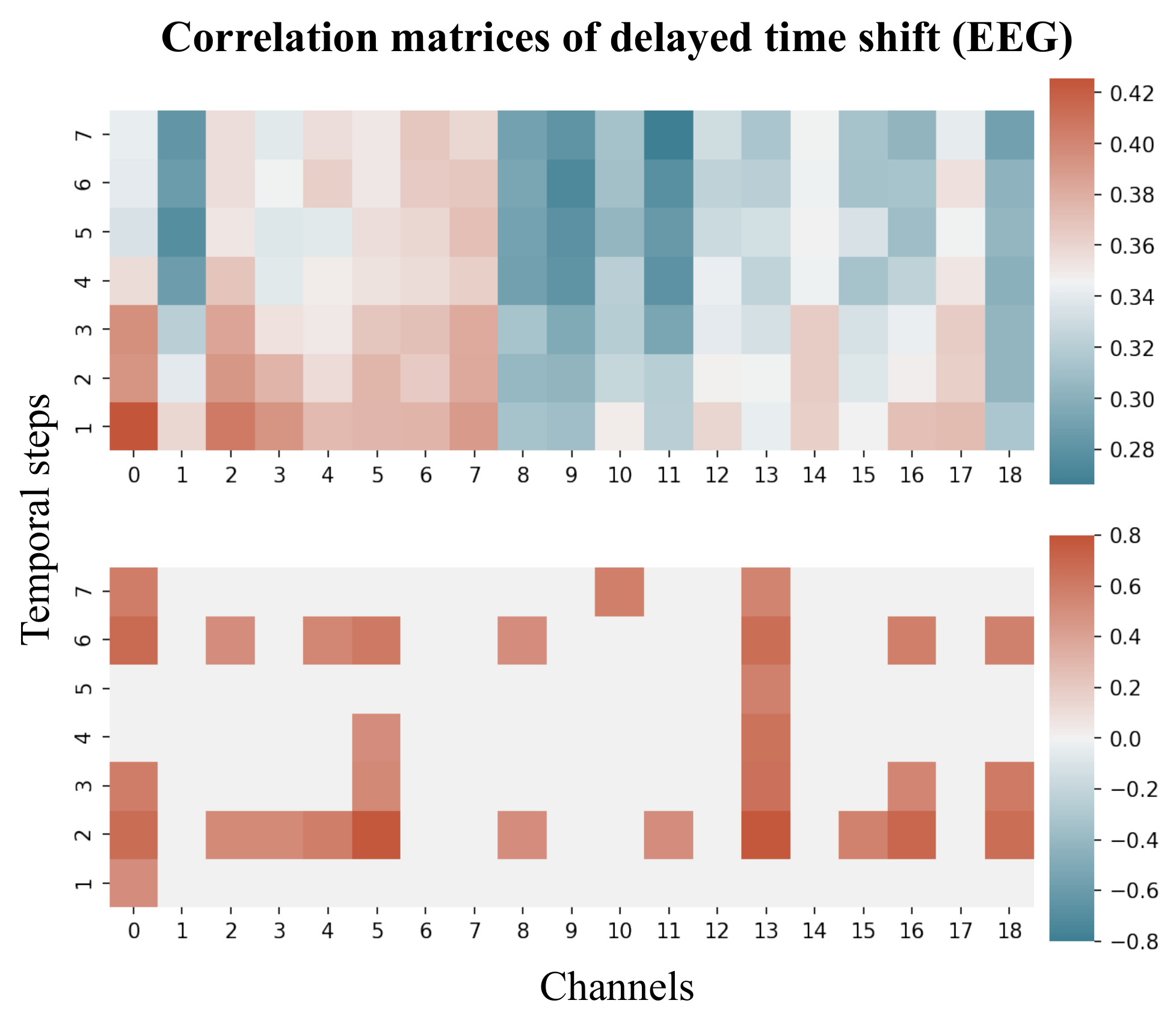}
    \end{minipage}%
    \label{pic:delayed_eeg}}%
  \caption{
  The correlation matrices of delayed time shift of SEEG and EEG. 
  \small
  For each subfigure, the top figure shows the average correlation matrix over all clips.
  And the bottom figure represents the correlation matrix of one particular sampled clip. 
  We compute cosine similarity between the first time segment of the first channel and the time segments of other channels in the next consecutive 7 time steps. For clear presentation, we sample 26 channels for SEEG data and set correlations below 0.5 to 0 for the bottom figure. 
  }
  \label{pic:delayed}
\end{figure}

\begin{figure*}[ht]
  \centering
  \includegraphics[width=\linewidth]{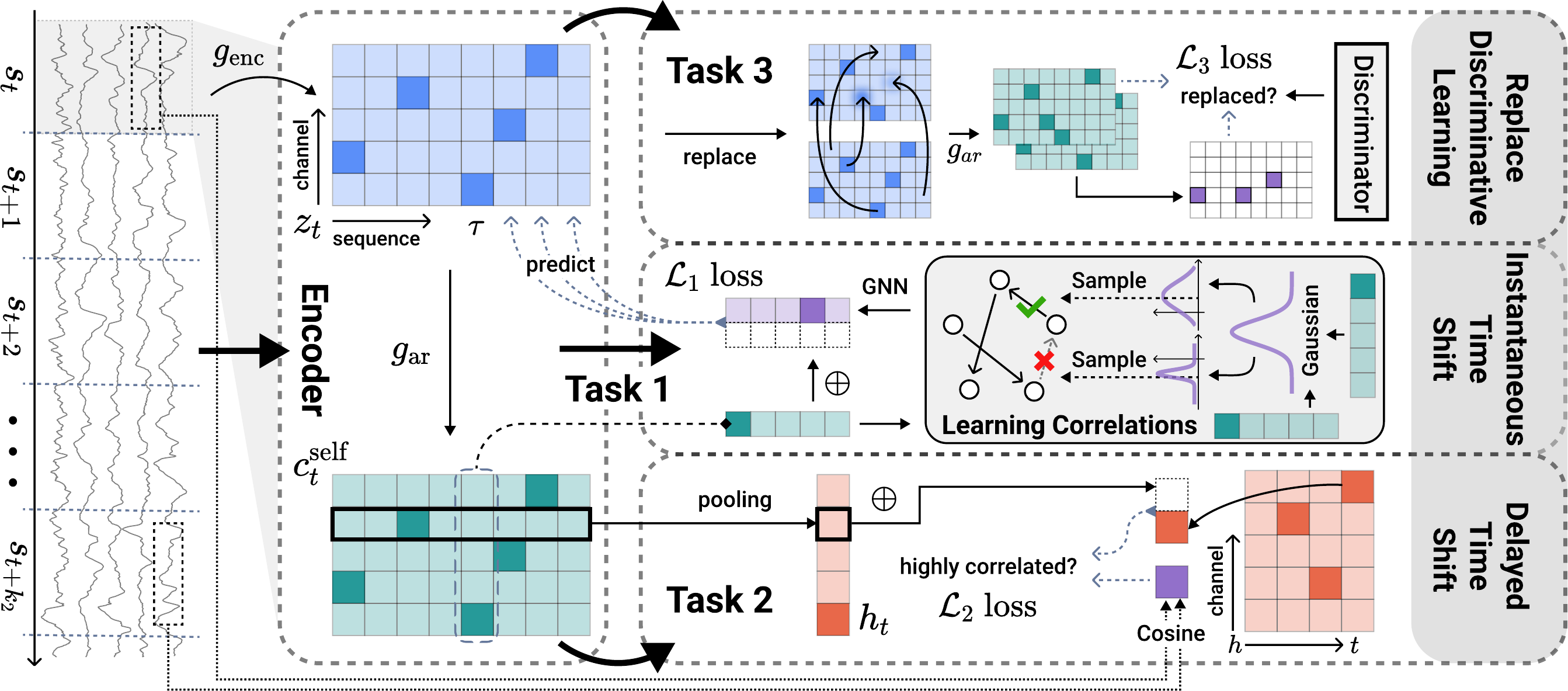}
  \caption{Overview of \model.
  \small
  The leftmost is the raw multi-channel brain signals. We use an encoder to map the raw data into a low-dimensional representation space. To capture the spatial and temporal correlation patterns, we propose three SSL tasks to guide the encoder to learn informative and distinguishable representations.
  }
  \label{pic:model}
\end{figure*}

We first use a non-linear encoder $g_{\text{enc}}$ (1D-CNN with $d$ kernels) mapping the observed time segments to the local latent $d$-dimensional representations $z_{t} = g_{\text{enc}}(s_{t}) \in \mathbb{R}^{\mathcal{T} \times \mathbf{C} \times d}$ for each channel separately. $\mathcal{T}$ is the sequential length after down sampling by $g_{\text{enc}}$. Then an autoregressive model $g_{\text{ar}}$ is utilized to summarize the historical $\tau$-length local information of each channel itself to obtain the respective contextual representations:
\begin{equation}
    c^{\text{self}}_{t, \tau} = g_{\text{ar}}(z_{t, 1}, \cdots, z_{t, \tau}).
\end{equation}
In this step, we only extract the contextual information of all channels independently. Based on the graph structure $\mathbf{A}_{t}$, we instantiate the aggregate function $\Phi(\cdot)$ in~\eqref{eq:mcpc_inequ} as GNNs due to their natural message-passing ability on a graph. Here we use a one-layer directed GCN~\citep{Yun2019gcn} to show the process:
\begin{equation}
    c^{\text{other}}_{t, \tau, i} = \text{ReLU}\left( \frac{\sum_{j \neq i}\mathbf{A}_{t}(i, j) \cdot c^{\text{self}}_{t, \tau, j}}{\sum_{j \neq i}\mathbf{A}_{t}(i, j)} \cdot \Theta \right),
\end{equation}
where $\Theta$ is the learnable matrix. Considering that we only aggregate other channels' information, the self-loop in GCN is removed here. Finally, by combining both $c^{\text{self}}_{t, \tau}$ and $c^{\text{other}}_{t, \tau}$ to obtain the global representations $c_{t, \tau}$, the model can predict the local representations $k_{1}$-step away $z_{t, \tau + k_{1}}$ based on the multi-channel CPC loss:
\begin{align}
    c_{t, \tau} &= \text{Concat}(c^{\text{self}}_{t, \tau}, c^{\text{other}}_{t, \tau}),\\
    \mathcal{L}_{1} = \mathcal{L}_{N} &= - \mathbb{E}_{t, i, k_{1}}\left[ \log \frac{c_{t, \tau, i}^{\top} W_{k_{1}} z_{t, \tau + k_{1}, i}}{\sum_{z_{j} \in X_{t}^{i}} c_{t, \tau, i}^{\top} W_{k_{1}} z_{j}} \right], \label{eq:infonce}
\end{align}
where $X_{t}^{i}$ denotes the random noise set including one positive sample $z_{t, \tau + k_{1}, i}$ and $N-1$ negative samples. $W_{k_{1}}$ is the learnable bilinear score matrix of the $k_{1}$-th step prediction.

\vpara{Delayed Time Shift.}
For brain areas far apart, there exists delayed brain signal propagation, which is confirmed by the data observations showed in Figure~\ref{pic:delayed}.
Figure~\ref{pic:delayed} confirms that there still exist significant correlations between time segments across several time steps. Unlike instantaneous time shift, delayed correlations are not stable. 
This can be concluded from the numerical difference between the averaged correlation matrix and the sampled correlation matrix in both figures.
Therefore, we design a more flexible self-supervised task to learn the delayed correlations.

Our motivation is that if a simple classifier can easily predict whether two time segments are highly correlated, the segment representations will be significantly different from those with weaker correlations. We thus define the delayed time shift task to encourage more distinguishable segment representations.
Similar with instantaneous time shift, we first compute the cosine similarity matrix based on raw data between time segments across several time steps.
For the $t$-th time segment of the $i$-th channel, the long-term correlation matrix $\mathbf{B}^{i}_{t}$ is computed as 
\begin{equation}
    \mathbf{B}^{i}_{t}(k_{2}, j) = \text{Cosine}(s_{t, i}, s_{t + k_{2}, j}),
\end{equation}
where 
$j$ traverses all channels including the $i$-th target channel
and $k_{2}$ traverses at most $K_{2}$ prediction steps. 
Then we construct pseudo labels $Y^{i}_{t}$ according to $\mathbf{B}^{i}_{t}$ to encourage the segment representations with higher correlations to be closer. 
A predefined threshold $\theta_{2}$ is set to assign pseudo labels:
\begin{equation}
    Y^{i}_{t}(k_{2}, j) = \left\{
    \begin{aligned}
    &1, \qquad &\mathbf{B}^{i}_{t}(k_{2}, j) \geq \theta_{2}, \\ 
    &0, \qquad &\mathbf{B}^{i}_{t}(k_{2}, j) < \theta_{2}.
    \end{aligned}\right.
\end{equation}
With the pseudo labels, we define the cross entropy loss of the delayed time shift prediction task:
\begin{gather}
    h_{t} = \text{Pooling}(c^{\text{self}}_{t, 1}, \cdots, c^{\text{self}}_{t, \mathcal{T}}), ~\label{eq:segrepre}\\
    \quad\ \ \hat{p} = \text{Softmax}(\text{MLP}(\text{Concat}(h_{t, i}, h_{t+k_{2}, j}))),\\
    \mathcal{L}_{2} = -\mathbb{E}_{t, i, k_{2}, j} \left[ Y^{i}_{t}(k_{2}, j) \log \hat{p} + (1 - Y^{i}_{t}(k_{2}, j)) \log (1 - \hat{p}) \right] \label{eq:longterm}
\end{gather}
where $\hat{p}$ is the predicted probability that the two segments are highly correlated. In practical application, we randomly choose $50\%$ labels from each $Y^{i}_{t}$ for efficient training.

\vpara{Replace Discriminative Learning.}
Consistently exploiting correlation for all channels will weaken the specificity between channels. However, there are significant differences in the physiological signal patterns of different brain areas recorded by channels. Therefore, retaining the characteristics of each channel cannot be ignored for the modeling of brain signals. 
For this purpose, we further design the replace discriminative learning task.

Following BERT~\citep{devlin2018bert}, we randomly replace $r\%$ local representations throughout $z_{t}$ by $\hat{z}_{t}$, which is sampled from any $\mathcal{T}$ sequences and any $\mathbf{C}$ channels in $z_{t}$.
We use the notation $\mathcal{I}(\hat{z}_{t})$ to represent the new local representations after replacement and the corresponding channel indexes of $\hat{z}_{t}$ in the original sequence. We generate pseudo labels $Y_{t}$ of the task as below:
\begin{equation}
    Y_{t}(\tau, i) = \left\{
    \begin{aligned}
    &1, \qquad &\mathcal{I}(\hat{z}_{t, \tau, i}) \neq i, \\ 
    &0,\qquad &\mathcal{I}(\hat{z}_{t, \tau, i}) = i.
    \end{aligned}\right.
\end{equation}
$\tau$ and $i$ traverse $\mathcal{T}$ sequences and $\mathbf{C}$ channels of $\hat{z}_{t}$.
After obtaining $\hat{z}_{t}$, we put it into the autoregressive model to get the new contextual representations $\hat{c}_{t} = g_{\text{ar}}(\hat{z}_{t})$. Finally, a simple discriminator implemented by an MLP is utilized to classify whether $\hat{c}_{t}$ are replaced by other channels or not:
\begin{equation}
    \mathcal{L}_{3} = -\mathbb{E}_{t, \tau, i}\left[ Y_{t}(\tau, i) \log \hat{q} + (1 - Y_{t}(\tau, i)) \log (1 - \hat{q}) \right], \label{eq:replace}
\end{equation}
where $\hat{q}$ is the predicted probability that $\hat{c}_{t, \tau, i}$ is replaced. When the accuracy of discrimination increases, different channel representations output by the autoregressive model are easier to distinguish. Therefore, the task encourages the model to preserve the unique characteristics of each channel.

Combining the multi-task loss functions~\eqref{eq:infonce},~\eqref{eq:longterm} and~\eqref{eq:replace}, we jointly train \model with $\mathcal{L} = (1 - \lambda_1 - \lambda_2)\mathcal{L}_{1} + \lambda_1\mathcal{L}_{2} + \lambda_2\mathcal{L}_{3}$.
After the SSL stage, the segment representations $h_{t}$ obtained from~\eqref{eq:segrepre} are used for downstream tasks.

\section{Experiments}\label{sec:exp}

\subsection{Datasets and Baselines}\label{sec:dataset}

\vpara{SEEG dataset.}
The SEEG dataset used in our experiment is anonymous and provided by a first-class hospital we cooperate with. For a subject suffering from epilepsy, 4 to 10 invasive electrodes with 52 to 124 channels are used for recording signals.
It is worth noting that since SEEG data are collected in a high frequency (1,000Hz or 2,000Hz) through multiple channels for several days, our data is massive. In total, we have collected 470 hours of SEEG signals with a total capacity of 550GB.
Professional neurosurgeons help us label the epileptic segments for \textbf{each channel}.

We obtain the samples for each subject respectively.
For the $i$-th subject, we first sample a dataset for self-supervised learning which is denoted as $SS_i$ (80\% for training and 20\% for validation), then sample training set $T_i$, validation set $V_i$ and testing set $E_i$ for the downstream stage. $SS_i$, $T_i$ and $V_i$ contain 1,000, 800 and 200 10-second SEEG clips respectively, while $E_i$ contains 510 10-second SEEG clips with positive-negative sample ratio of 1:50. There is no overlap among the samples of the three sets.
We use a 1-second window to segment each clip without overlap and our target is to make predictions for \textbf{all channels} in each \textbf{1-second segment}.

\vpara{EEG dataset.}
We use the Temple University Hospital EEG Seizure Corpus (TUSZ) v1.5.2~\citep{shah2018temple} as our EEG dataset. It is the largest public EEG seizure database, containing 5,612 EEG recordings, 3,050 annotated seizures from clinical recordings, and eight seizure types. We include 19 EEG channels in the standard 10-20 system. 
We randomly split the official TUSZ train set by subjects into training and validation sets at a ratio of 90/10 for model training and hyperparameter tuning respectively, and we keep out the official TUSZ test set for model evaluation. Therefore, the training, validation and testing sets consist of distinct subjects.
After dividing the dataset by subjects,
we start to sample EEG clips.
For the self-supervised learning, we randomly sample 3,000 12-second unlabeled EEG clips for training and validation, with ratios of 90\% and 10\% respectively. As for the downstream task, we first obtain 3,000 sampled 12-second labeled EEG clips (80\% for training and 20\% for validation). Then, we sample another 3,900 12-second labeled EEG clips with positive-negative sample ratio of 1:10 for testing.
It is worth noting that the labels of EEG data are coarse-grained, which means we only have the label of whether epilepsy occurs in a whole EEG clip.

\vpara{Baselines.}
We compare \model with state-of-the-art models including one supervised classification model \textbf{MiniRocket}~\citep{dempster2021minirocket} and several self-supervised and unsupervised models: \textbf{CPC}~\citep{van2018representation}, \textbf{SimCLR}~\citep{chen2020simple}, \textbf{Triplet-Loss (T-Loss)}~\citep{franceschi2019unsupervised}, \textbf{Time Series Transformer (TST)}~\citep{zerveas2021transformer}, \textbf{GTS}~\citep{shang2021discrete}, \textbf{TS-TCC}~\citep{eldele2021time} and \textbf{TS2Vec}~\citep{yue2021ts2vec}.

\subsection{Experimental Setup}\label{sec:setup}

For EEG data, as the number of subjects is large while the number of samples for each subject is very small, we follow the standard experimental setting to divide the training, validation and testing sets by subjects.
As for SEEG data, since every subject includes many samples, it is accessible to sample training, validation and testing sets for each subject respectively.
To demonstrate the effectiveness of \model, we first formally define the seizure detection task. Then we perform three experiments to show that our model outperforms the state-of-the-art baselines and has the ability to be deployed to clinical practice. 
We also show the ablation study and case study of the correlation graph in Section~\ref{subsec:ablation} and~\ref{subsec:case}.
The hyperparameter analysis is showed in Appendix~\ref{app:hyper}.
We report the results of another downstream task of emotion recognition in Appendix~\ref{app:emotion}.
In order to ensure the reliability of the experimental results, we repeat all the experiments five times with five different random seeds in the fine-tuning stage and report standard deviation in all tables.

\begin{task}[Seizure Detection]
    \textbf{Given} a time-ordered set including $I_{\mathcal{S}}$ consecutive time segments with the index of the first segment being $t_{0}$: $\mathcal{S} = \{ s_{t_{0}}, \dots, s_{t_{0} + I_{\mathcal{S}}} \}$,
    models \textbf{predict} the labels $\hat{Y}^{s}_{t, i}$ for all time segments in $\mathcal{S}$ (\emph{i.e.}, $t = t_{0}, \dots, t_{0} + I_{\mathcal{S}}$) and all channels in each segment (\emph{i.e.}, $i = 1, \dots, \mathbf{C}$).
\end{task}

\vpara{Subject dependent experiment~\citep{chen2022brainnet}.}
Due to the larger difference between subjects in SEEG dataset than that in EEG dataset, we first perform the subject dependent experiment to obtain the upper bound of model performance on SEEG dataset. 
More specifically, for the $i$-th subject, we first perform self-supervised learning of the model on unlabeled data sampled from itself (\emph{i.e.}, $SS_i$).
When training the downstream task, the encoder of SSL models will be \textbf{fine-tuned with a very low learning rate} on labeled data sampled from itself (\emph{i.e.}, $T_i$ and $V_i$).
Finally, we test the models on $E_i$ and report the average performance over all subjects.
For fair comparison, we use \textbf{the same downstream model and experimental setup} for all models (see details in Appendix~\ref{app:main_imp}).

\vpara{Subject independent experiments.}
To meet practical clinical needs, we design two clinically feasible experiments. \textbf{The first is the domain generalization experiment}, that is, training the model on data of existing subjects and directly predicting data of unknown subjects. This is the standard experimental setting on EEG dataset. 
As for SEEG dataset, we follow the 3-1-1 setting, where 3 subjects are used for training (\emph{i.e.}, SSL on $SS_i, SS_j, SS_k$; fine-tuning on $T_i, T_j, T_k$), 1 subject is used for validation (\emph{i.e.}, $V_m$) and 1 subject is used for testing (\emph{i.e.}, $E_n$). Note that $i$, $j$, $k$, $m$ and $n$ are indexes for different subjects.
We conduct the experiments for random combinations, pick up the best result for each subject, and report the average results over all subjects.

\textbf{The second is the domain adaptation experiment~\citep{motiian2017unified}.}
Different from the ideal domain generalization experiment which does not use the labeled data of target subjects at all, domain adaptation experiment allows using a small amount of the data to achieve better clinical performance of our model.
This is because of the large data size due to the long-time records of the subjects in the SEEG dataset, and even if the model is fine-tuned with partially labeled data, it is clinically valuable to predict the large amount of remaining data in the target subjects.
In this experiment, we first perform SSL on one subject (\emph{i.e.}, source domain $SS_i$) and then fine-tuning is performed using partially labeled data from another subject (\emph{i.e.}, target domain $T_j$ and $V_j$).
Finally, we perform seizure detection on the testing set of the target subject (\emph{i.e.}, $E_j$).
We pick up four subjects with typical seizure patterns in the SEEG dataset, and report the results of all one-to-one combinations.

\begin{table}[ht]
    \caption{The average performance of the subject dependent experiment on SEEG dataset.}
    \label{tab:seizure_prediction_SEEG}
    \setlength\tabcolsep{3.9pt}
    \centering
    \begin{tabular}{lcccc}
        \toprule
        \textbf{Models} & Pre. & Rec. & $F_1$ & $F_2$ \\
        \midrule
        MiniRocket & 22.98\small{$\pm$0.15} & \textbf{66.24\small{$\pm$0.26}} & 31.79\small{$\pm$0.19} & 43.58\small{$\pm$0.22} \\ 
        \midrule
        CPC & 27.65\small{$\pm$4.49} & 55.07\small{$\pm$3.52} & 34.20\small{$\pm$3.40} & 42.73\small{$\pm$2.57} \\
        SimCLR & 11.06\small{$\pm$3.95} & 51.54\small{$\pm$5.87} & 16.60\small{$\pm$4.68} & 25.41\small{$\pm$4.95} \\
        T-Loss & 29.29\small{$\pm$2.65} & 51.55\small{$\pm$2.53} & 36.00\small{$\pm$1.97} & 43.13\small{$\pm$1.57} \\
        TST & 13.60\small{$\pm$3.48} & 44.65\small{$\pm$4.21} & 19.80\small{$\pm$3.73} & 28.41\small{$\pm$3.29} \\
        GTS & 24.29\small{$\pm$4.26} & 40.39\small{$\pm$5.80} & 29.16\small{$\pm$2.97} & 34.17\small{$\pm$2.36} \\
        TS-TCC & 22.10\small{$\pm$7.65} & 49.94\small{$\pm$5.41} & 25.32\small{$\pm$8.02} & 32.74\small{$\pm$7.95} \\
        TS2Vec & 30.56\small{$\pm$2.17} & 52.83\small{$\pm$2.89} & 36.03\small{$\pm$1.72} & 43.35\small{$\pm$1.59} \\
        \midrule
        \model & \textbf{37.97\small{$\pm$2.75}} & 65.07\small{$\pm$2.68} & \textbf{46.45\small{$\pm$2.25}} & \textbf{55.28\small{$\pm$1.77}} \\
        \bottomrule
    \end{tabular}
\end{table}

\subsection{Subject Dependent Experiment}

The average performance of the subject dependent experiment on the SEEG dataset is presented in Table~\ref{tab:seizure_prediction_SEEG}. 
Since the positive-negative sample ratio of SEEG dataset is imbalanced, $F$-score is a more appropriate metric to evaluate the performance of models than only considering precision or recall.
Especially in clinical applications, doctors pay more attention to finding as much seizures as possible, we thus choose $F_{1}$ and $F_{2}$ scores in the experiment.
Overall, \model improves the $F_{1}$-score by 28.92\% and the $F_{2}$-score by 26.85\% on SEEG dataset, compared to the best baseline, demonstrating that \model can learn more informative representations from SEEG data.
Through this experiment, we obtain the upper bound of the performance of models on SEEG dataset. We can find that it is still difficult to achieve high performance even if models are trained, verified and tested on the same subject. Combined with the analysis of subsequent experimental results, this reflects that seizure detection on SEEG data is much more difficult than that on EEG.

\begin{table*}[ht]
    \caption{The average performance of the domain generalization experiment on SEEG and EEG datasets.}
    \label{tab:dg}
    \centering
    \begin{tabular}{lccccccccc}
        \toprule
        \multirow{2}{*}{\textbf{Models}} & \multicolumn{4}{c}{SEEG} & \multicolumn{5}{c}{EEG} \\
        \cmidrule(lr){2-5}
        \cmidrule(lr){6-10}
        & Pre. & Rec. & $F_1$ & $F_2$ & Pre. & Rec. & $F_1$ & $F_2$ & AUROC \\
        \midrule
        MiniRocket & 5.85\small{$\pm$0.20} & \textbf{39.18\small{$\pm$0.59}} & 9.93\small{$\pm$0.29} & 17.24\small{$\pm$0.37} & \textbf{22.86\small{$\pm$0.84}} & 63.08\small{$\pm$1.47} & 33.56\small{$\pm$1.11} & 46.66\small{$\pm$1.33} & 75.30\small{$\pm$0.77} \\
        \midrule
        CPC & 22.88\small{$\pm$5.06} & 23.92\small{$\pm$3.90} & 20.11\small{$\pm$3.27} & 21.23\small{$\pm$2.49} & 22.81\small{$\pm$2.04} & 58.31\small{$\pm$7.55} & 32.50\small{$\pm$1.24} & 44.02\small{$\pm$2.43} & 74.53\small{$\pm$1.00} \\
        SimCLR & 14.02\small{$\pm$3.71} & 26.36\small{$\pm$4.99} & 11.07\small{$\pm$3.49} & 13.47\small{$\pm$4.01} & 12.63\small{$\pm$1.62} & 74.88\small{$\pm$16.77} & 21.33\small{$\pm$1.95} & 36.78\small{$\pm$2.61} & 55.86\small{$\pm$5.36} \\
        T-Loss & 21.38\small{$\pm$4.25} & 28.50\small{$\pm$4.07} & 23.48\small{$\pm$3.30} & 25.90\small{$\pm$3.06} & 20.72\small{$\pm$1.26} & 69.25\small{$\pm$3.99} & 31.82\small{$\pm$1.08} & 47.00\small{$\pm$0.50} & 75.88\small{$\pm$0.49} \\
        TST & 8.37\small{$\pm$3.96} & 32.48\small{$\pm$8.25} & 11.80\small{$\pm$3.91} & 15.67\small{$\pm$3.69} & 15.65\small{$\pm$1.54} & 28.59\small{$\pm$12.93} & 19.65\small{$\pm$4.36} & 23.87\small{$\pm$8.09} & 58.20\small{$\pm$4.27} \\
        GTS & 24.16\small{$\pm$5.91} & 27.99\small{$\pm$4.98} & 22.77\small{$\pm$2.69} & 24.15\small{$\pm$2.79} & 18.86\small{$\pm$1.09} & 62.51\small{$\pm$5.04} & 28.88\small{$\pm$0.88} & 42.54\small{$\pm$1.48} & 71.69\small{$\pm$1.88} \\
        TS-TCC & 24.24\small{$\pm$4.51} & 26.61\small{$\pm$5.96} & 19.89\small{$\pm$5.23} & 22.11\small{$\pm$5.08} & 15.55\small{$\pm$0.88} & 39.76\small{$\pm$11.08} & 21.89\small{$\pm$1.20} & 29.60\small{$\pm$4.64} & 58.63\small{$\pm$1.62} \\
        TS2Vec & 27.93\small{$\pm$5.23} & 29.49\small{$\pm$3.97} & 26.78\small{$\pm$3.29} & 27.88\small{$\pm$3.52} & 21.40\small{$\pm$0.63} & 58.31\small{$\pm$6.14} & 31.24\small{$\pm$1.18} & 43.24\small{$\pm$2.78} & 73.35\small{$\pm$1.02} \\
        \midrule
        \model & \textbf{30.69\small{$\pm$5.92}} & 38.94\small{$\pm$4.34} & \textbf{32.61\small{$\pm$3.60}} & \textbf{35.64\small{$\pm$3.04}} & 22.13\small{$\pm$1.03} & \textbf{76.99\small{$\pm$4.49}} & \textbf{34.32\small{$\pm$0.90}} & \textbf{51.34\small{$\pm$0.97}} & \textbf{77.96\small{$\pm$0.97}} \\
        \bottomrule
    \end{tabular}
\end{table*}

\subsection{Domain Generalization Experiment}

In this experiment, we validate and compare the generalization ability of all models under a strict setting, in which the models are trained on source subjects and then directly perform seizure detection on the unseen target subjects.
This is an ideal scenario for clinical applications and the results are shown in Table~\ref{tab:dg}.
For SEEG dataset, in general, the performance of models under the domain generalization setting decreases significantly (41.73\% on average in terms of $F_2$-score) compared with that in subject dependent experiment. The drop for recall metric is more pronounced, confirming that the distribution shift of subjects in SEEG data is more significant than that in EEG. This results from the fact that different brain regions and different types of epileptic waves have different physiological properties and patterns.
Nonetheless, \model still improves $F_1$ and $F_2$ scores by 21.77\% and 27.83\% respectively, compared to the best baseline. The results prove that
\model has a superior generalization ability benefiting from rational inductive assumption of model design.
We point out that although GTS is also graph-based model, it directly learns the graph structure for each segment and ignores the stable and long-term correlations between different channels. This implies that our proposed graph structure learning strategy based on the stable correlations is reasonable and effective.

Table~\ref{tab:dg} also shows the results of domain generalization experiment on EEG dataset. Following the common evaluation scheme on EEG dataset~\citep{tang2022selfsupervised}, we add Area Under the Receiver Operating Characteristic (AUROC) metric in our experiment.
Our model is designed to learn the representation for each channel, while there is only one label for an EEG clip. Therefore,
it requires the pooling operation to aggregate representations output by our model over channels and time segments for seizure detection. 
This setting makes the performance improvement of our model not as significant as that in the SEEG experiment.
Nevertheless, \model still outperforms all baselines on $F_{1}$-score, $F_{2}$-score and AUROC with an increase of 2.26\%, 9.23\% and 2.74\%, respectively.
SimCLR gets the highest recall but the lowest precision and AUROC, indicating that it may be not reasonable to regard time segments as independent samples without considering the contextual data. 
The worst performance for TST shows that mask-prediction SSL paradigm may not be suitable for non-stationary time series data.

\begin{table*}[ht]
    \caption{The performance of the domain adaptation experiment on SEEG dataset in terms of $F_2$-score. 
    \small
    DA row denotes the performance of \model in the domain adaptation experiment. Max-base and Non-DA rows represent the best performance of baselines and \model in the subject dependent experiment. We bold the best result and underline the second best result.}
    \label{tab:transfer_learning}
    \setlength\tabcolsep{2pt}
    \centering
    \begin{tabular}{lcccccccccccc}
        \toprule
        \multirow{2}{*}{\textbf{Setting}} & \multicolumn{3}{c}{Group $A$} & \multicolumn{3}{c}{Group $B$} & \multicolumn{3}{c}{Group $C$} & \multicolumn{3}{c}{Group $D$}\\
        \cmidrule(lr){2-4}
        \cmidrule(lr){5-7}
        \cmidrule(lr){8-10}
        \cmidrule(lr){11-13}        
        & $B{\rightarrow}A$ & $C{\rightarrow}A$ & $D{\rightarrow}A$
        & $A{\rightarrow}B$ & $C{\rightarrow}B$ & $D{\rightarrow}B$
        & $A{\rightarrow}C$ & $B{\rightarrow}C$ & $D{\rightarrow}C$
        & $A{\rightarrow}D$ & $B{\rightarrow}D$ & $C{\rightarrow}D$\\
        \midrule
        DA & \small{68.55$\pm$4.27} & \underline{\small{69.14$\pm$6.54}} & \small{68.78$\pm$4.12} & \small{41.08$\pm$2.59} & \small{46.06$\pm$3.05} & \underline{\small{46.12$\pm$2.04}}
        & \small{40.04$\pm$3.98} & \small{39.34$\pm$2.11} & \textbf{\small{48.64$\pm$5.48}} & \underline{\small{80.82$\pm$0.65}} & \small{79.90$\pm$1.11} & \small{80.72$\pm$1.31} \\
        Max-base & \multicolumn{3}{c}{62.49\small{$\pm$2.30}} & \multicolumn{3}{c}{39.78\small{$\pm$2.04}} & \multicolumn{3}{c}{33.59\small{$\pm$2.23}} & \multicolumn{3}{c}{75.35\small{$\pm$0.79}} \\
        Non-DA & \multicolumn{3}{c}{\textbf{70.63\small{$\pm$1.41}}} & \multicolumn{3}{c}{\textbf{46.62\small{$\pm$2.42}}} & \multicolumn{3}{c}{\underline{46.09\small{$\pm$2.35}}} & \multicolumn{3}{c}{\textbf{83.27\small{$\pm$0.95}}} \\
        \bottomrule 
    \end{tabular}
\end{table*}

\subsection{Domain Adaptation Experiment}

According to the results of domain generalization experiment, it is difficult for \model to achieve competitive performance as shown in Table~\ref{tab:seizure_prediction_SEEG} on SEEG dataset. 
The results show that seizure detection on SEEG dataset is much more difficult than that on EEG dataset.
Alternatively, due to the long-time record, clinical SEEG data contains tens or even hundreds of seizures, allowing us to use a small amount of labeled data to fine-tune our model and then use it to predict the remaining data. 
In this way, \model can still achieve great performance, showing good generalization ability and clinical application value of our model.
Table~\ref{tab:transfer_learning} shows the performance of the domain adaptation (DA) experiment for four subjects with typical seizure patterns provided by doctors from SEEG dataset. 
More specifically, we train \model on one subject and fine-tune it on all other three subjects.
$B{\rightarrow}A$ denotes that the SSL model is trained on Subject-B, and then fine-tuned and tested on data from Subject-A.
The results of Max-base and Non-DA rows correspond to the performance of the best baseline and \model respectively in scenarios $A{\rightarrow}A$, $B{\rightarrow}B$, $C{\rightarrow}C$ and $D{\rightarrow}D$.

Compared with the results of the setting that the self-supervised model and downstream model are both trained on the same subject, the $F_{2}$-scores of all 12 cross-domain scenarios reduce by less than 15\%. 
Additionally, it can be observed that in all cross-domain scenarios, \model beats the best baseline in the corresponding scenarios without DA. 
It is worth noting that $D{\rightarrow}C$ scenario even outperforms corresponding Non-DA result. The possible reason is that the signal patterns on Subject-D are more significant and recognizable than those on Subject-C.
Therefore, the SSL model trained on higher quality source domain can better distinguish signal states when performing downstream tasks on target domain.
Overall, the domain adaptation experiment makes \model achieve competitive performance as shown in  Table~\ref{tab:seizure_prediction_SEEG} by fine-tuning it on only a small amount of labeled data from the target domain. 
The results suggest that \model captures the inherent features and outputs generalized representations between subjects, because we fine-tune the SSL model with a very low learning rate (1e-6). 
From the perspective of pre-training, the SSL model trained on the source subject gives good initial parameters for the fine-tuning stage on the target subject.

\subsection{Ablation Study}\label{subsec:ablation}

Considering the complexity of our model, we conduct sufficient ablation experiments to demonstrate the effectiveness of each component in \model.
Specifically, we mainly compare \model with three types of different model variants.
\begin{itemize}[leftmargin=*]
    \item[(1)] \textbf{Replace the method to aggregate channel information.}
    To verify the effectiveness of our proposed graph structure learning, we have proposed two ideas on how to directly implement the multi-channel CPC in Section~\ref{sec:theory}. For the second idea, we have reported the results of a shared CPC regarding all channels as one on the CPC row of Table~\ref{tab:seizure_prediction_SEEG}. For the first idea, we design two strategies to combine multi-channel CNN or MLP into CPC respectively to learn representations for each channel. See detailed description in Appendix~\ref{app:ablation}.

    \item[(2)] \textbf{Remove one component.}
    We firstly remove the correlation graph structure learning module from the instantaneous time shift task and degenerate the task to single-channel CPC while still uniformly sampling negative samples in all channels.
    This variant is denoted as {\model\small{-Graph}}. 
    Next, we respectively remove the whole instantaneous time shift task, the delayed time shift task and replace discriminative task. These variants are denoted as {\model\small{-Instant}}, {\model\small{-Delay}} and {\model\small{-Replace}}.

    \item[(3)] \textbf{Preserve one SSL task.}
    {\model\small{-onlyInstant}}, {\model\small{-onlyDelay}} and {\model\small{-onlyReplace}} indicate that \model only performs instantaneous time shift task, delayed time shift task and replace discriminative task respectively.
    \end{itemize}

\begin{table}[ht]
    \caption{The results of ablation study.}
    \label{tab:ablation_study}
    \setlength\tabcolsep{1.2pt}
    \centering
    \begin{tabular}{lcccc}
        \toprule
        \textbf{Models} & Pre. & Rec. & $F_1$ & $F_2$ \\
        \midrule
        CPC & 27.65\small{$\pm$4.49} & 55.07\small{$\pm$3.52} & 34.20\small{$\pm$3.40} & 42.73\small{$\pm$2.57} \\
        {CPC\small{-Conv}} & 6.39\small{$\pm$0.77} & 33.21\small{$\pm$4.00} & 10.53\small{$\pm$1.07} & 17.46\small{$\pm$1.42} \\
        {CPC\small{-MLP}} & 25.84\small{$\pm$3.07} & 52.70\small{$\pm$3.65} & 32.18\small{$\pm$2.46} & 40.34\small{$\pm$2.05} \\
        \midrule
        {\model\small{-Graph}} & 36.72\small{$\pm$4.59} & 60.48\small{$\pm$4.47} & 43.61\small{$\pm$3.08} & 51.47\small{$\pm$2.68} \\
        {\model\small{-Instant}} & 34.49\small{$\pm$4.37} & 55.41\small{$\pm$3.90} & 41.57\small{$\pm$3.48} & 48.38\small{$\pm$2.52} \\
        {\model\small{-Delay}} & 35.00\small{$\pm$4.49} & \textbf{65.61\small{$\pm$2.94}} & 42.97\small{$\pm$3.61} & 52.51\small{$\pm$1.93} \\
        {\model\small{-Replace}} & 36.08\small{$\pm$5.35} & 63.67\small{$\pm$4.24} & 43.66\small{$\pm$3.66} & 52.49\small{$\pm$2.32} \\
        \midrule
        {\model\small{-onlyInstant}} & 36.43\small{$\pm$4.44} & 63.66\small{$\pm$2.12} & 43.35\small{$\pm$3.83} & 51.82\small{$\pm$2.67} \\
        {\model\small{-onlyDelay}} & 31.59\small{$\pm$4.24} & 55.03\small{$\pm$5.26} & 38.56\small{$\pm$2.84} & 46.05\small{$\pm$2.26} \\
        {\model\small{-onlyReplace}} & 34.13\small{$\pm$6.84} & 56.06\small{$\pm$3.68} & 40.02\small{$\pm$4.47} & 47.44\small{$\pm$2.40} \\
        \midrule
        \model & \textbf{37.97\small{$\pm$2.75}} & 65.07\small{$\pm$2.68} & \textbf{46.45\small{$\pm$2.25}} & \textbf{55.28\small{$\pm$1.77}} \\
        \bottomrule
    \end{tabular}
\end{table}

Table~\ref{tab:ablation_study} shows the results of ablation study on SEEG dataset. It can be observed that the complete \model achieves the best performance on $F_1$ and $F_2$ scores, demonstrating the effectiveness of each component in our model design. 
For the first type of variants, we can observe that the performance of {CPC\small{-Conv}} decreases dramatically. We speculate that this is because the channels are relatively independent, and the correlation between most channels is weak or even non-existent. Direct adoption of multi-channel convolution may introduce spurious and noisy correlations. However, the graph structure learning proposed by us has a sparsity assumption, and the representation extraction of each channel is relatively independent, so it can effectively learn and aggregate more significant information. 
For {CPC\small{-MLP}}, we use an MLP to aggregate the representations of other channels, and then concatenate it with the representation of the target channel to predict future data. Unlike {CPC\small{-Conv}}, which adopts multi-channel convolution for the raw data to obtain the \textit{mixed} low-level representations, {CPC\small{-MLP}}, like \model, learns the correlation of channels based on the \textit{separate} high-level representations. Therefore, the performance of {CPC\small{-MLP}} does not drop as dramatically as that of {CPC\small{-Conv}}.

For {\model\small{-Instant}}, the significant decrease in performance illustrates that capturing the spatial and short-term patterns is quite important and is the key to learning the essential representations in multi-channel brain signals.
For {\model\small{-Graph}}, the decrease in performance demonstrates that multi-channel CPC can greatly help learn more informative representations.
Additionally, the performance in {\model\small{-Delay}} and {\model\small{-Replace}} also decreases significantly, illustrating that modeling long-term temporal patterns and preserving the characteristics of channels can help learn more distinguishable representations.
For the third type of variants,
it can be observed that the instantaneous time shift is the most important task, and the delayed time shift task and the replace discriminative task contribute similarly to the performance of the complete model.

\subsection{Case Study}\label{subsec:case}

\begin{figure}[ht]
  \centering
    \subfigure[Normal correlation graph.]{
    \begin{minipage}[ht]{0.5\linewidth}
    \centering
    \includegraphics[width=4.3cm]{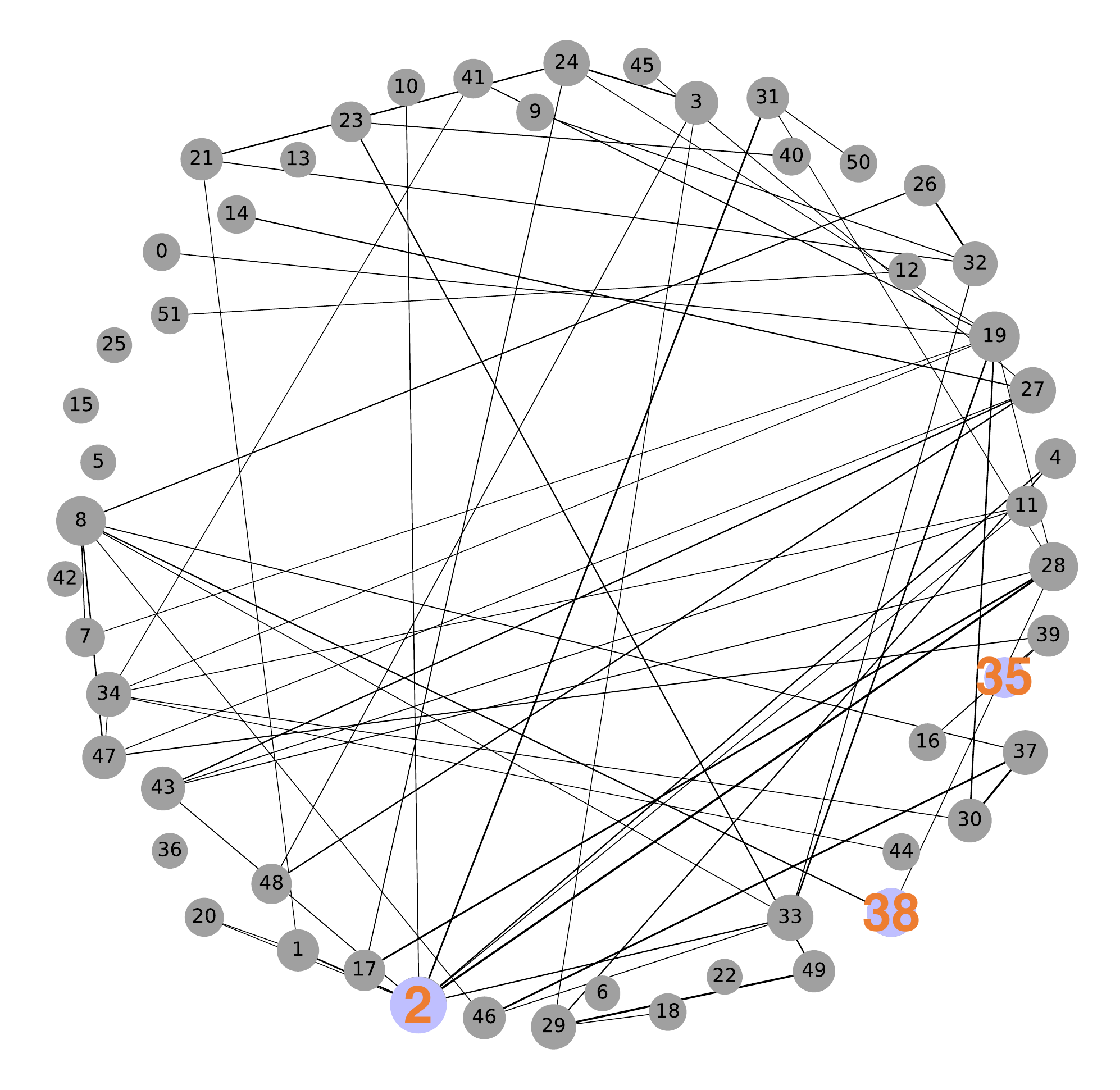}
    \end{minipage}%
    \label{subpic:normal_corGraph}}%
    \subfigure[Seizure correlation graph.]{
    \begin{minipage}[ht]{0.5\linewidth}
    \centering
    \includegraphics[width=4.3cm]{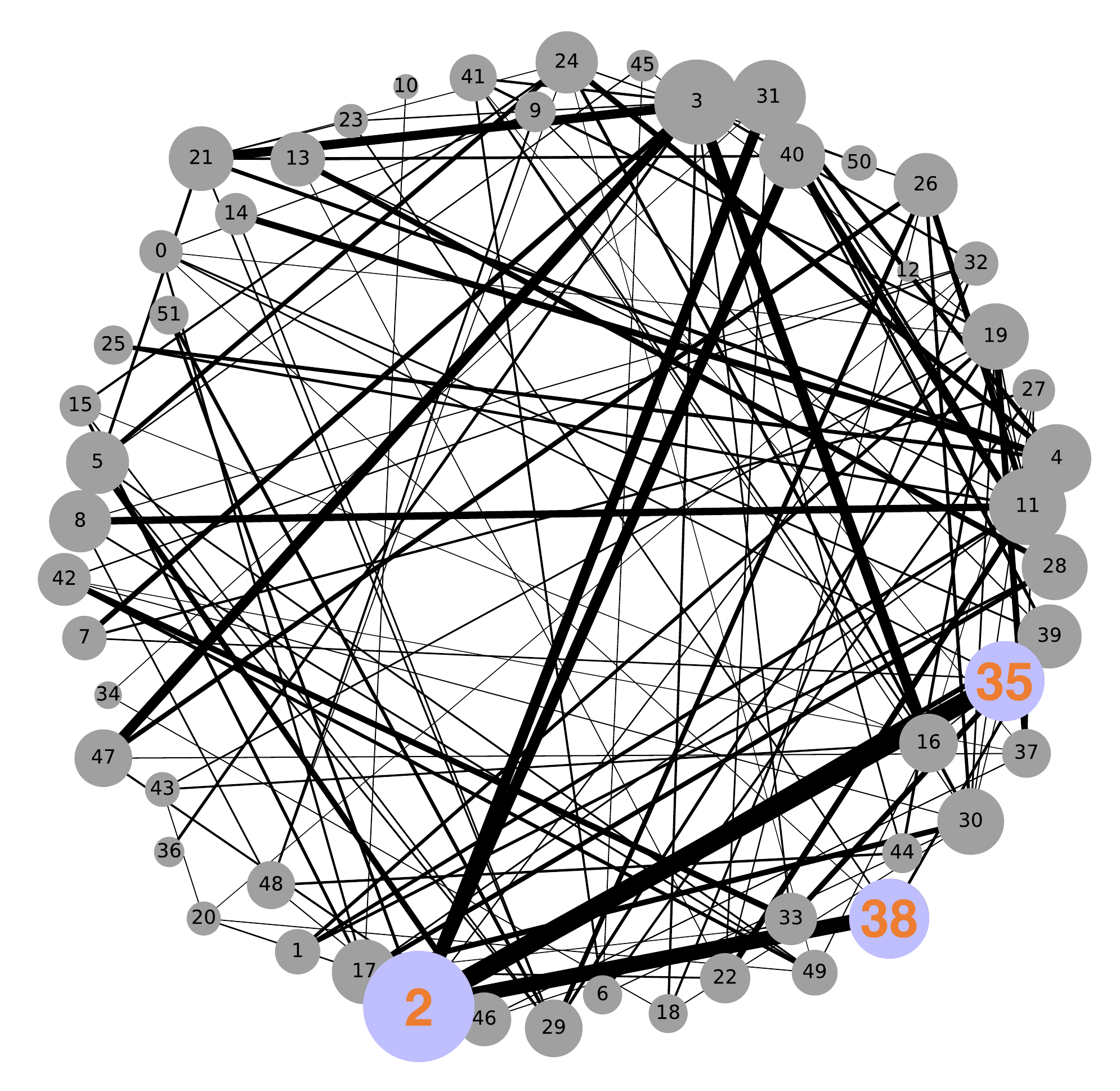}
    \end{minipage}%
    \label{subpic:seizure_corGraph}}%
  \caption{
  Case study on correlation graphs learned by \model.
  }
  \label{pic:corGraph}
\end{figure}

In this section, we study the correlation graphs between the channels learned by \model. We randomly sample normal and seizure SEEG clips of one particular subject, and visualize their correlation graphs $\mathbf{A}_{t}$ (defined in Section~\ref{sec:correlation}) in Figure~\ref{pic:corGraph}.
In this figure, the thickness of an edge indicates its weight. 
And the larger the sum of weights of the edges connected to the node, the larger the size of the circle of the node.
It can be observed that in the normal state, the correlation is sparser and the weights for edges are smaller, indicating a holistically weaker correlation between channels. 
In contrast, in the seizure state, the connection pattern between channels varies, where the correlation becomes denser and the edge weights become larger. Furthermore, in Figure~\ref{subpic:seizure_corGraph}, edges with larger weights are usually connected to 2 seizure channels. For example, Channel-2, Channel-35 and Channel-38 are all in seizure states and the edge weights between them are large, indicating that the brain areas recorded by the three channels have a higher probability of being the focal area. This can help neurosurgeons to better localize seizure lesions.

\section{Conclusion} \label{sec:conclusion}

In this paper, we propose a general multi-channel SSL framework \model, which can be applied for learning representations of both EEG and SEEG brain signals. 
Based on domain knowledge and data observations, we succeed to use the correlation graph between channels as the cornerstone of our model. 
The proposed instantaneous and delayed time shift tasks help us capture the correlation patterns of brain signals spatially and temporally. 
The replace discriminative task helps \model learn a unique representations for each channel to achieve accurate channel-wise prediction.
Extensive experiments of seizure detection on large-scale real-world datasets demonstrate the superior performance and clinical value of \model. 
However, there are still some limitations of our work. 
For example, negative sampling of multi-channel CPC consumes certain memory and time. 
As for the future work, we plan to collect more types of brain signals and extend \model to more downstream tasks.

\vpara{Acknowledgment.} This work is supported by NSFC (No.62176233), the National Key Research and Development Project of China (No.2018AAA0101900) and the Fundamental Research Funds for the Central Universities.

\bibliographystyle{ACM-Reference-Format}
\bibliography{reference}

\clearpage
\appendix
\section{Preliminaries} \label{app:segment}

\vpara{Brain signal data.}
For both EEG and SEEG data, there are multiple \textit{electrodes} with $\mathbf{C}$ contacts that are sampled at a fixed frequency to record the brain signals. We also call these contacts \textit{channels}. For every sampling point, each channel records the potential value of the brain region in which they are located, constituting abstract multi-channel time series data. A complete record file contains a total of $\mathbf{L}$ time points, for which we use the notation $X = \{ x_{l} \in \mathbb{R}^{\mathbf{C}} \}_{l=1}^{\mathbf{L}}$ to represent. In this paper, we use $i$ and $j$ to denote the indexes of channels, such as $x_{l} = \{ x_{l,i} \}_{i=1}^{\mathbf{C}}$. For every $x_{l, i}$, we assign a binary label $Y_{l, i} \in \{ 0, 1\}$ to it according to the start and end time of seizure signals marked by doctors. The time points are in the seizure state with positive labels ($Y_{l, i} = 1$), while zero labels ($Y_{l, i} = 0$) represent the normal data.

\vpara{Preprocessing.}
Following the existing time series works~\citep{Zhu2017preproseg, bagnall2017great, schafer2015boss} with the common preprocessing of segmentation, we use a $\mathbf{W}$-length window to divide the original data $X$ into time segments $S = \{ s_{t} \in \mathbb{R}^{\mathbf{W} \times \mathbf{C}} \}_{t=1}^{|S|}$ without overlapping. The number of segments $|S| = \lfloor \mathbf{L}/\mathbf{W} \rfloor$. The segment label is obtained from the time points of the whole segment, \emph{i.e.}, $Y^{s}_{t, i} = \max\{ Y_{t \times \mathbf{W} + 1, i}, \dots, Y_{(t + 1) \times \mathbf{W}, i} \}$.

\section{Single-channel CPC} \label{app:cpc}

Contrastive Predictive Coding (CPC), a pioneering model for self-supervised contrastive learning, sets the pretext task to predict low-level local representations by high-level global contextual information $c_{t}$. In this way, the model can avoid learning too many details of the raw data and pay more attention to the contextual semantic information. The InfoNCE loss proposed in CPC has become the basic design of the contrastive learning loss function. Formally, given a raw data sample set $X=\{x_{1}, \dots, x_{N}\}$ consisting of one positive sample from $p(x_{t+k}|c_{t})$ and $N-1$ negative samples from the noisy distribution $p(x_{t+k})$, InfoNCE will optimize:
\begin{equation}
    \mathcal{L}_{N} = -\mathbb{E}_{X}\left[ \log{\frac{f_{k}(x_{t+k}, c_{t})}{\sum_{x_{j} \in X} f_{k}(x_{j}, c_{t})}} \right]. \label{eq:single_cpc}
\end{equation}
In order to obtain the best classification probability of the positive sample with the cross entropy loss function, the optimal $f_{k}(x_{t+k}, c_{t})$ is proportional to $p(x_{t+k}|c_{t})/p(x_{t+k})$. Furthermore, the optimal loss function is also closely related to mutual information, as below:
\begin{align}
    \mathcal{L}_{N}^{\text{opt}} &= -\mathbb{E}_{X}\left[ \log{\frac{p(x_{t+k}|c_{t})/p(x_{t+k})}{p(x_{t+k}|c_{t})/p(x_{t+k}) + \sum_{x_{j} \in X_{\text{neg}}} p(x_{j}|c_{t})/p(x_{j})}} \right] \nonumber \\
    &\ge \mathbb{E}_{X}\left[ \log{\frac{p(x_{t+k})}{p(x_{t+k}|c_{t})} N} \right] ~\label{eq:inequ} \\
    &= -I(x_{t+k};c_{t}) + \log{N}.
\end{align}
Therefore, we can conclude that while minimizing the loss function $\mathcal{L}_{N}$, we are also constantly approximating the mutual information of raw data distribution $p(x_{t+k})$ and contextual semantic distribution $p(c_{t})$.
It turns out that InfoNCE is indeed a well-established loss function designed for self-supervised contrastive learning.

\section{Implementation Details of \model} \label{app:main_imp}

The non-linear encoder $g_{\text{enc}}$ used in \model is composed of three 1-D convolution layers and a one-layer LSTM model~\citep{lstm1997} is used as the autoregressive model $g_{\text{ar}}$. The model is optimized using Adam optimizer~\citep{kingma2015adam} with a learning rate of 2e-4 and weight decay of 1e-6 for the self-supervised learning stage. And for the downstream training stage, the downstream model is optimized with a learning rate of 5e-4 and weight decay of 1e-6 while the SSL model is fine-tuned with a low learning rate of 1e-6. 
For the hyperparameters of \model, we set $\theta_{1} = 0.5$ and $\theta_{2} = 0.5$. We set the maximum value of $k_{1}$ in instantaneous time shift task as 8. As Figure~\ref{pic:hyper_k2} shows, we set $K_{2} = 7$ so as to take into account the step with the most significant correlation in delayed time shift task.
Lastly, we build our model using PyTorch 1.8~\citep{paszke2019pytorch} and train it on a workstation with 4 NVIDIA GeForce RTX 3090.

\begin{figure}[ht]
  \centering
  \includegraphics[width=\linewidth]{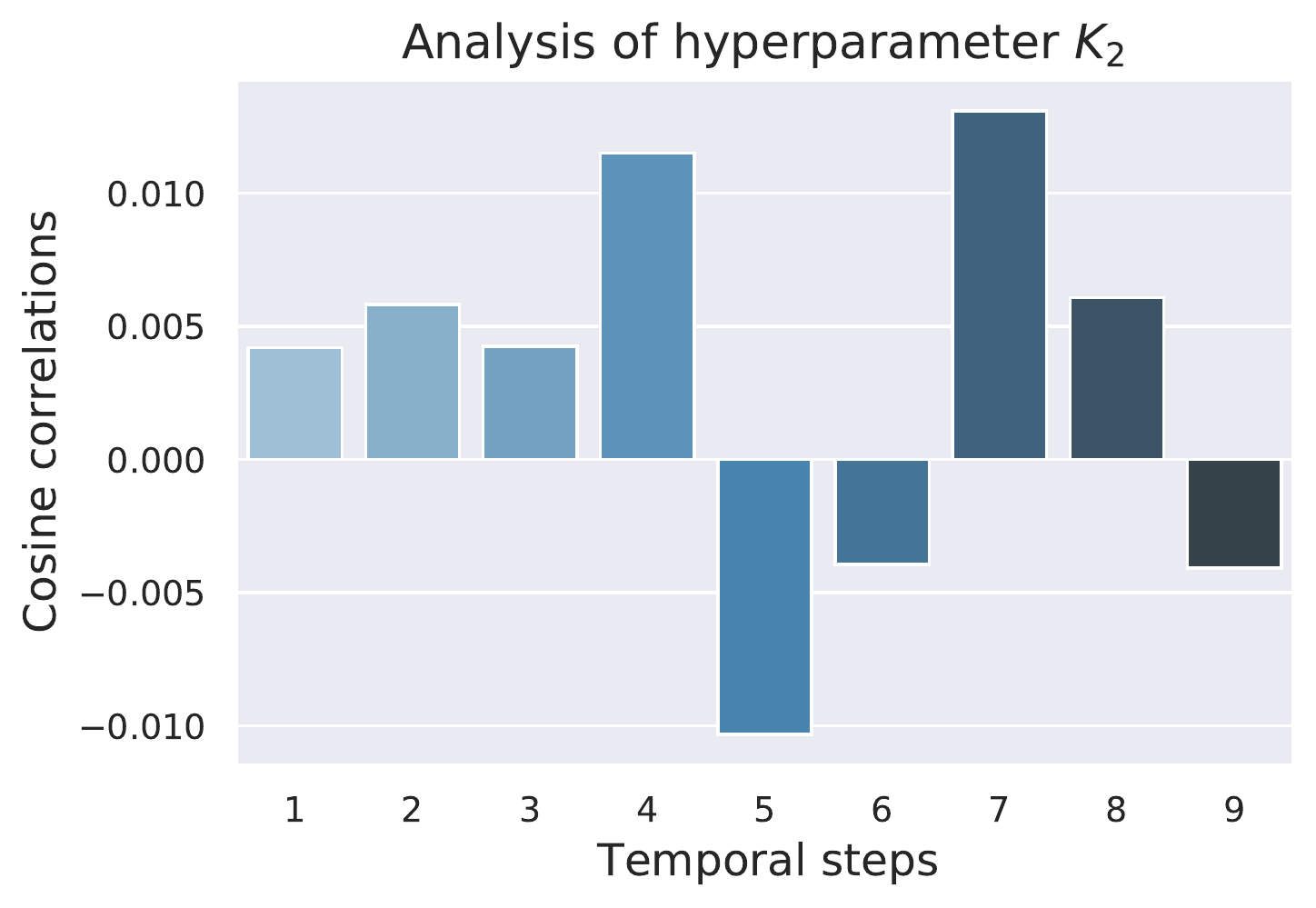}
  \caption{
  The data observation of how to choose hyperparameter $K_{2}$.
  \small
  We first average the correlations between each channel and all other channels in each time step. Then we average those of all channels in the same time step.
  }
  \label{pic:hyper_k2}
\end{figure}

For the downstream task, we first utilize an LSTM model~\citep{lstm1997} to encode the segment representations of each channel in chronological order independently. One-layer self-attention~\citep{Ashish2017attention} is then adopted to all channels within the same time step. Finally, a two-layer MLP classifier is used to predict whether seizure is occurring in the time segments. All baselines share the same downstream model in our experiments.

\section{Details of Ablation Study} \label{app:ablation}

\textbf{Replace the method to aggregate channel information.}
We design two strategies to combine multi-channel CNN or MLP into CPC respectively to learn representations for each channel.
\begin{itemize}[leftmargin=*]
    \item Directly use 1-Dimension CNN to encode the whole time series data and the number of channels during the process is $\mathbf{C} \rightarrow 256 \rightarrow 256 \rightarrow \mathbf{C} \times 256$, and split the output into $\mathbf{C}$ representations, each of which is a 256-dimensional representation. Then an LSTM is implemented to it. Then we execute the self-supervised task and the downstream task of CPC based on the representations for each channel as \model does, this variant is denoted as {CPC\small{-Conv}}.

    \item We use the contextual representations of all $n$ channels as input to an MLP in a fixed order, but we set the representation of the target channel to $0$ tensor when we aggregate them. By using the output of MLP as the aggregated representation of other channels, we perform subsequent experiments following exactly the same steps as \model. We name this variant as {CPC\small{-MLP}}.
\end{itemize}

\section{Hyperparameter Analysis} \label{app:hyper}

\begin{figure}[ht]
  \centering
    \subfigure[Weights search for $\mathcal{L}_2$.]{
    \begin{minipage}[ht]{\linewidth}
    \centering
    \includegraphics[width=7cm]{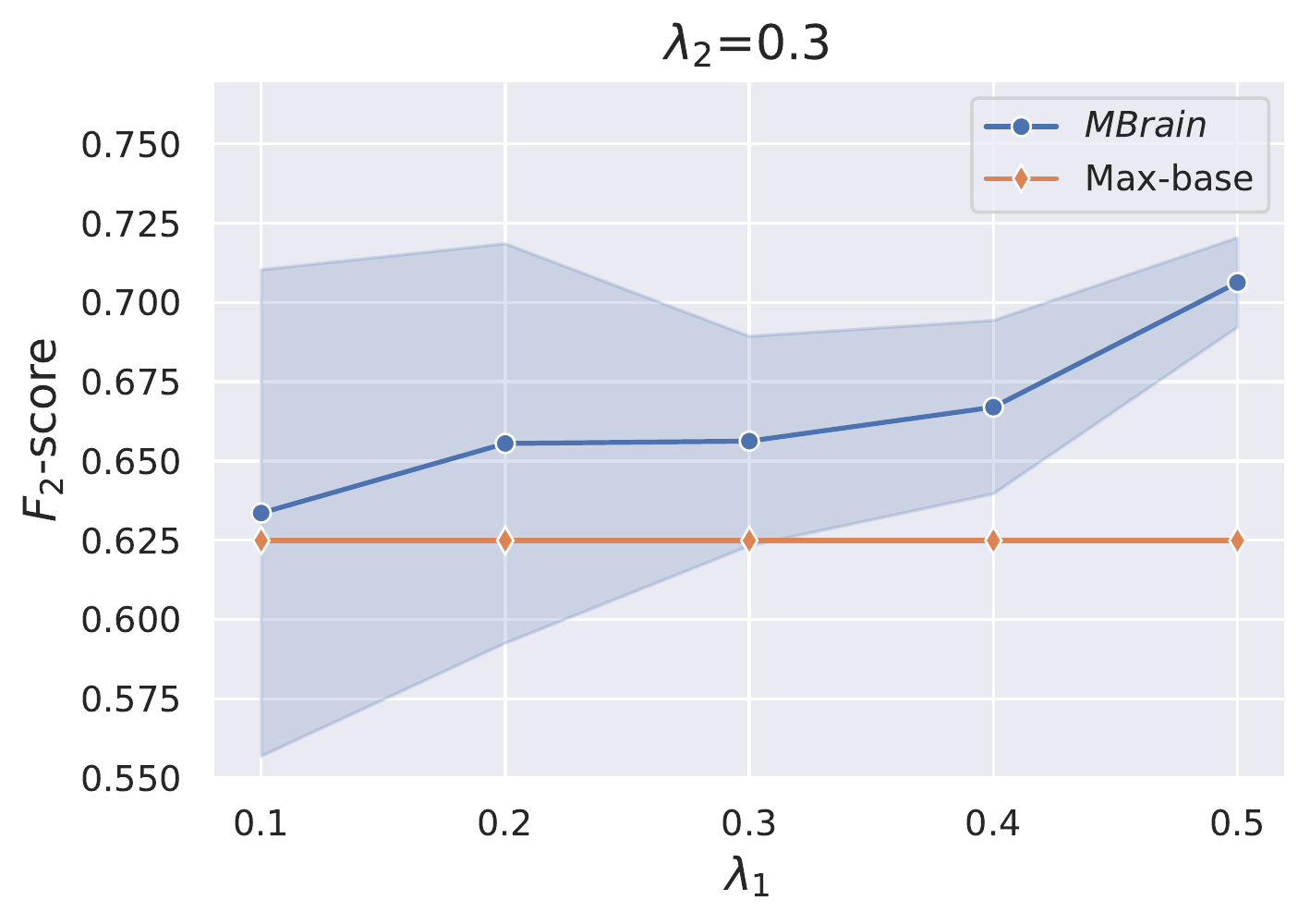}
    \end{minipage}%
    \label{subpic:tuning_lambda1}}%
    \\
    \subfigure[Weights search for $\mathcal{L}_3$.]{
    \begin{minipage}[ht]{\linewidth}
    \centering
    \includegraphics[width=7cm]{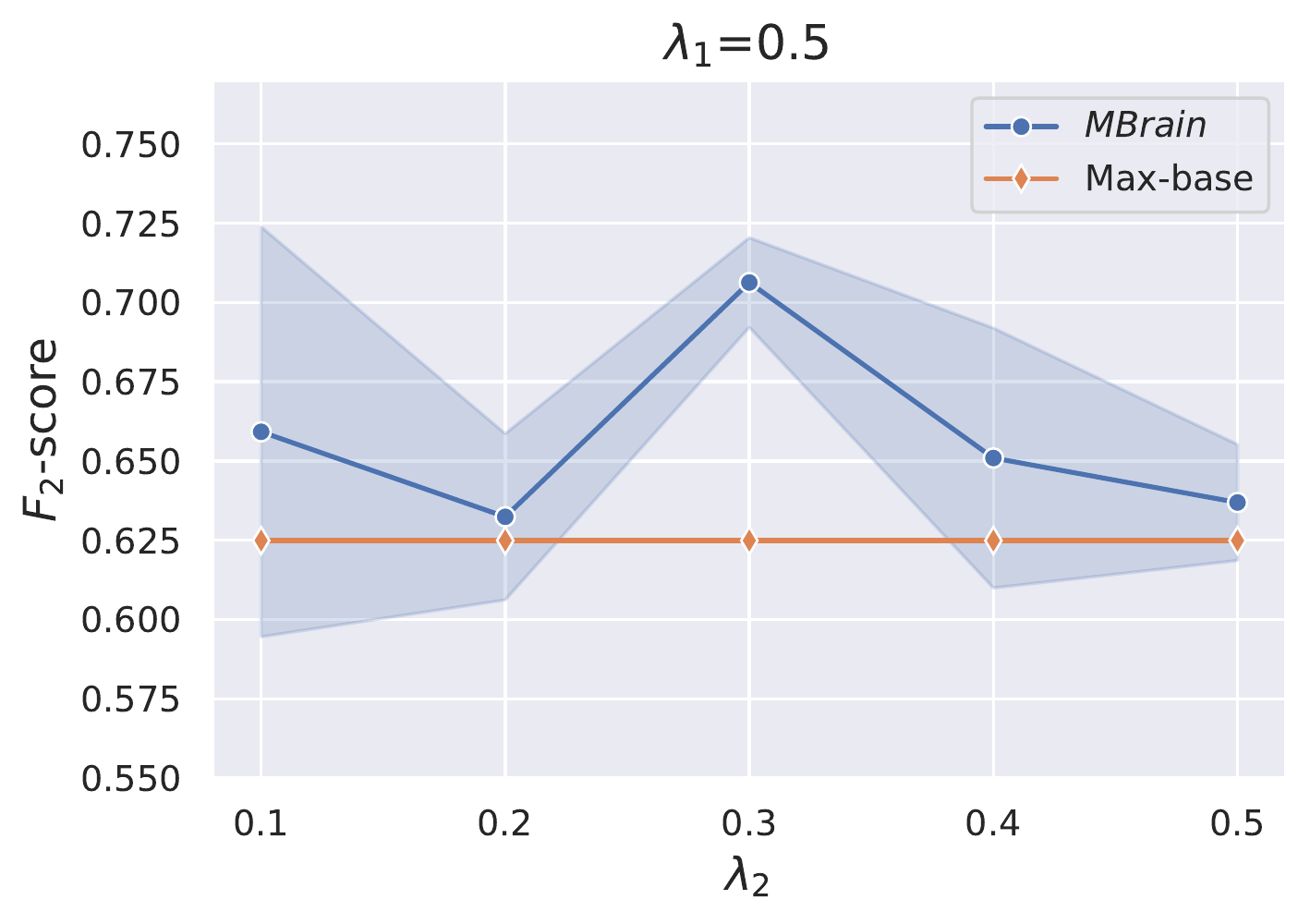}
    \end{minipage}%
    \label{subpic:tuning_lambda2}}%
  \caption{
  Sensitivity analysis on loss weights.
  }
  \label{pic:tuning_lambda}
\end{figure}

\vpara{Sensitivity analysis on loss weights.}
Our loss function is defined as: $\mathcal{L} = (1-\lambda_1-\lambda_2)\mathcal{L}_1 + \lambda_1\mathcal{L}_2 + \lambda_2\mathcal{L}_3$, where $\mathcal{L}_1$, $\mathcal{L}_2$ and $\mathcal{L}_3$ are the loss of instantaneous time shift prediction task, delayed time shift prediction task and replace discriminative task respectively, and $\lambda_1$ and $\lambda_2$ are hyperparameters to balance the three pre-training tasks. We search both of the weights of $\lambda_1$ and $\lambda_2$ in the set \{0.1, 0.2, 0.3, 0.4, 0.5\} and report the tuning results with $F_2$-score for seizure detection task on subject-A from SEEG dataset.
In~\ref{subpic:tuning_lambda1} and~\ref{subpic:tuning_lambda2}, we can see that $\lambda_1$ = 0.5 and $\lambda_2$ = 0.3 lead to the optimal performance. In addition, \model consistently performs better than the best baseline.

\begin{figure}[ht]
  \centering
  \includegraphics[width=\linewidth]{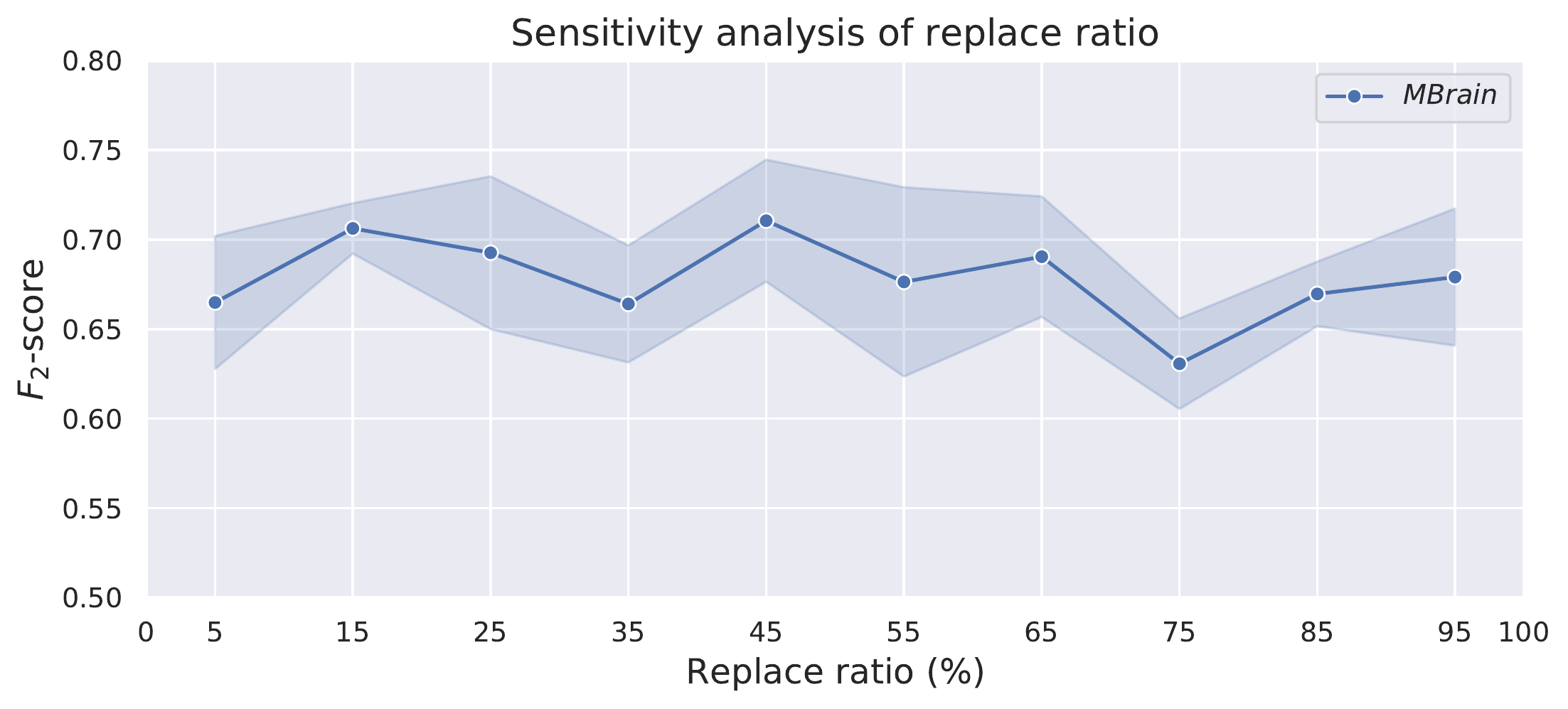}
  \caption{
  Sensitivity analysis on replace ratio $r\%$.
  }
  \label{pic:replace_ratio}
\end{figure}

\vpara{Sensitivity analysis on replace ratio.}
We perform sensitivity analysis on replace ratio $r\%$ from replace discriminative task. We search the replace ratio from $5\%$ to $95\%$ and report the tuning results with $F_2$-score for seizure detection task on subject-A from SEEG dataset. As Figure~\ref{pic:replace_ratio} shows, when the replace ratio is set as $45\%$, \model has the best performance of 71.06$\pm$3.41. While \model gets the smallest standard deviation and the second best performance of 70.63$\pm$1.41 when the replace ratio is set as $15\%$.

\section{Emotion Recognition Task} \label{app:emotion}

To measure the performance of our model on different downstream tasks, we use the SJTU Emotion EEG Dataset (SEED)~\citep{duan2013differential} to test the model's performance in the emotion recognition task.
In SEED, fifteen Chinese film clips (positive, neutral and negative emotions) were chosen from the pool of materials as stimuli used in the experiments. The duration of each film clip is approximately 4 minutes. We divide each EEG segment into 24-second segments without overlapping. For experimental efficiency, we downsample the segments to half the original frequency for each 24-second EEG segment. We randomly split the SEED dataset by subjects into train set, valid set and test set at a ratio of 3:1:1. We sample 3500 and 2000 EEG clips from the training patients for SSL and downstream task. We then sample 500 clips as validation set. Finally, we use all the data from the testing patients to evaluate models.

\begin{table}[ht]
    \caption{The performance of models on SEED dataset.}
    \label{tab:SEED}
    \centering
    \begin{tabular}{lcc}
        \toprule
        \textbf{Models} & Acc. & AUROC \\
        \midrule
        MiniRocket & 49.80\small{$\pm$0.60} & 75.28\small{$\pm$0.17} \\ 
        \midrule
        CPC & 48.23\small{$\pm$4.36} & 73.48\small{$\pm$1.51} \\
        SimCLR & 44.84\small{$\pm$5.82} & 63.05\small{$\pm$6.52} \\
        T-Loss & 47.90\small{$\pm$3.99} & 68.56\small{$\pm$5.96} \\
        TST & 35.13\small{$\pm$0.34} & 53.49\small{$\pm$1.26} \\
        GTS & 39.85\small{$\pm$0.34} & 60.18\small{$\pm$1.30} \\
        TS-TCC & 40.10\small{$\pm$5.50} & 66.38\small{$\pm$3.39} \\
        TS2Vec & 48.75\small{$\pm$2.74} & 71.60\small{$\pm$2.16} \\
        \midrule
        \model & \textbf{52.44\small{$\pm$1.21}} & \textbf{75.52\small{$\pm$1.27}} \\
        \bottomrule
    \end{tabular}
\end{table}

Table~\ref{tab:SEED} shows the results of \model and all baseline models on the emotion recognition task on SEED dataset. Since this is a 3-class classification task with balanced samples for each class, we only report the two metrics of Accuracy (Acc.) and AUROC. 
As can be seen from the table, \model improves the Accuracy by 5.30\% and the AUROC by 0.32\% on SEED dataset compared to the best result of baseline methods, demonstrating that \model still has a good performance on other downstream tasks of brain signals. 
It is worth noting that, like the TUSZ dataset, the SEED dataset only has coarse labels for each EEG clip, but our proposed \model aims to learn fine-grained representations. Therefore, the performance improvement on the SEED dataset is not as obvious as that on the SEEG dataset, but \model still outperforms all baseline models.

\section{Implementation Details of Baselines}\label{app:baseline}

\begin{itemize}[leftmargin=*]
    \item \textbf{MiniRocket}~\citep{dempster2021minirocket}: Rocket~\citep{dempster2020rocket} is a state-of-the-art supervised time series classification method based on evaluations on public benchmarks~\citep{bagnall2017great, tan2020monash}, involves training a linear classifier on top of features extracted by a flat collection of numerous and various random convolutional kernels. MiniRocket is a variant of Rocket which improves processing time, while offering essentially the same accuracy.
    We use the open source code from \href{https://github.com/angus924/minirocket}{https://github.com/angus924/minirocket}. For each subject, we use the features obtained through MiniRocket to train an independent logistic regression classifier for each channel and test it on the test set of that channel.

    \item \textbf{CPC}~\citep{van2018representation}: This is a self-supervised learning method based on a contrastive loss InfoNCE. The pretext task of CPC is set to predict future local low-level representations obtained from multi-layer CNNs by contextual high-level representations obtained from an autoregressive model. This is the backbone model in this paper. We use the open source code of the corrected version from \href{https://github.com/facebookresearch/CPC_audio}{https://github.com/facebookresearch/CPC\_audio}.
    
    \item \textbf{SimCLR}~\citep{chen2020simple}: This is a simple yet effective framework for contrastive learning of visual representations and we use time-series speciﬁc augmentations to adapt it to our application. We implemented SimCLR on time series data by ourselves. We use the same encoder architecture and parameter configuration as TS-TCC. In the meantime, we also follow TS-TCC and use scaling (sigma=1.1) as the data augmentation way.
    
    \item \textbf{Triplet-Loss (T-Loss)}~\citep{franceschi2019unsupervised}: The approach employs time-based negative sampling and a triplet loss to learn representations for time series segments. We use the default model architecture from the source code provided by the author (\href{https://github.com/White-Link/UnsupervisedScalableRepresentationLearningTimeSeries}{https://github.com/White-Link/UnsupervisedScalableRepresentationLearningTimeSeries}). For the sampling method of negative samples, we use the data of the previous batch as the candidate set of negative samples of the current batch data (the negative sample candidate set for the first batch is itself). Since the dataloader is shuffled at the end of each epoch, there is no need to worry about the case where the set of sampled negative samples does not change.
    
    \item \textbf{Time Series Transformer (TST)}~\citep{zerveas2021transformer}: This is a unsupervised representation learning framework for multivariate time series by training a transformer model to extract dense vector representations of time series through an input denoising objective. We use the default model architecture from the source code provided by the author (\href{https://github.com/gzerveas/mvts_transformer}{https://github.com/gzerveas/mvts\_transformer}).
    
    \item \textbf{GTS}~\citep{shang2021discrete}: This is a time series forecasting model that learns a graph structure among multiple time series and forecasts them simultaneously with a GNN. In view of this, this model can learn useful representations from unlabeled time series data. We use the default model architecture from the source code provided by the author (\href{https://github.com/chaoshangcs/GTS}{https://github.com/chaoshangcs/GTS}). In the pre-training stage, we divide each time series segment into 10 parts on average, and learn a time series forecasting model that predicts the next 2 steps based on the previous 8 steps. In the downstream task stage, we use the representation after step 10 as the representation of the time series segment for the seizure detection task.
    
    \item \textbf{TS-TCC}~\citep{eldele2021time}: This is an unsupervised time-series representation learning framework, applying a temporal contrasting module and a contextual contrasting module to learn robust and discriminative representations. We use the default model architecture from the open source code provided by the author (\href{https://github.com/emadeldeen24/TS-TCC}{https://github.com/emadeldeen24/TS-TCC}).
    % tough cross-view prediction task to learn robust temporal features. In addition, a contextual contrasting module is used to learn discriminative features upon the learned robust representations.    
    
    \item \textbf{TS2Vec}~\citep{yue2021ts2vec}: This is a universal representation learning framework for time series, that applies hierarchical contrasting to learn scale-invariant representations within augmented context views. We use the default model architecture from the source code provided by the author (\href{https://github.com/yuezhihan/ts2vec}{https://github.com/yuezhihan/ts2vec}).
\end{itemize}

\end{document}